\documentclass[final]{article}


\usepackage{microtype}
\usepackage{amsmath}
\usepackage{graphicx}
\usepackage{xcolor}
\usepackage{soul}
\usepackage{algorithmic}
\usepackage{algorithm}
\usepackage{tabto}
\usepackage{caption}
\usepackage{subcaption}
\usepackage{amsthm}


\newtheorem{theorem}{Theorem}
\newtheorem{lemma}{Lemma}

\newtheorem{corollary}{Corollary}



\newcommand{\starinst}[5]{\mathcal{#1}_{#2}(#2,~#3,~#4,~#5)}	








\newcommand{\facilityset}{\mathcal{F}}	
	
\newcommand{\facset}{\mathcal{F}}				

\newcommand{\clientset}{\mathcal{C}}			
\newcommand{\cliset}{\mathcal{C}}				


\newcommand{\budget}{\mathcal{B}}


\newcommand{\capacity}{\textit{u}}				

\newcommand{\crich}{\mathcal{C}_D}				
\newcommand{\cpoor}{\mathcal{C}_S}				
\newcommand{\cdense}{\mathcal{C}_D}				
\newcommand{\csparse}{\mathcal{C}_S}






\newcommand{\ballofj}[1]{\textit{ball($ #1 $)}} 	

\newcommand{\C}[1]{\hat{C_{#1}}}					
\newcommand{\hatofC}[1]{\hat{C_{#1}}}				

\newcommand{\bundle}[1]{\mathcal{N}_{#1}}			
\newcommand{\neighbor}[1]{\mathcal{N}_{#1}}


\newcommand{\R}[1]{\mathcal{R}_{#1}}					
\newcommand{\T}[1]{\tau({#1})}

\newcommand{\cen}[1]{G_{#1}}
\newcommand{\sj}[1]{\mathcal{S}_{#1}}
	

\newcommand{\dist}[2]{c(#1,~#2)}

\newcommand{\facilitycost}{\textit{$ f_i $}}		

\newcommand{\A}[2]{\mathcal{A}_{\rho^*}(#1,{#2})}						


\newcommand{\bard}[1]{{d_{#1}}}			

\newcommand{\floor}[1]{\lfloor{#1}\rfloor}	


\newcommand{\sumlimits}[2]{\displaystyle\sum\limits_{#1}^{#2}}												



\newcommand{\sumap}[1]{\sum_{#1}{}} 


\newcommand{\price}[1]{b^c_{#1}}
\newcommand{\pricef}[1]{b^f_{#1}}

\newcommand{\load}[1]{l_{#1}}

\newcommand{\singleclient}{\textit{j}}
\newcommand{\singlefacility}{\textit{i}}

\newcommand{\zofi}[1]{z_#1}

\newcommand{\primezofi}[1]{z'_{#1}}



\newcommand{\demandofj}[1]{\textit{$ d_{#1} $}}

\newcommand{\obj}[1]{\mathcal{C}ostKM(#1)}

\newcommand{\cardTwo}[2]{size(#1,~#2)}

\newcommand{\etal}{\textit{et al}.}

\newcommand{\soner}[1]{s^1_{#1}}
\newcommand{\stwor}[1]{s^2_{#1}}
\newcommand{\Soner}[1]{S^1_{#1}}
\newcommand{\Stwor}[1]{S^2_{#1}}

\newcommand{\mcone}[1]{G^1_{#1}}
\newcommand{\mctwo}[1]{G^2_{#1}}

\newcommand{\dense}{j_d}

\newcommand{\sparseone}{j_{s}}

\newcommand{\resj}[1]{res({#1})}
\newcommand{\floordjbyu}[1]{\lfloor{d_{#1}/u}\rfloor}

\newcommand{\tauhat}[1]{\hat{\tau}({#1})}


\newcommand{\sigmaone}{\psi}
\newcommand{\sr}[1]{s_{#1}}
\newcommand{\Sr}[1]{S_{#1}}

\bibliographystyle{plain}


  \date{}
\begin{document}

	\title{Constant factor Approximation Algorithm for Uniform Hard Capacitated Knapsack Median Problem}
	
	\maketitle
  
	\begin{center}
		\author{Sapna Grover$^1$,}
		\author{Neelima Gupta$^2$,}
		\author{Samir Khuller$^3$ and }
		\author{Aditya Pancholi$^4$}
	\end{center}
	\begin{enumerate}
		\item {Department of Computer Science, University of Delhi, India.\\
			\texttt{sgrover@cs.du.ac.in,sapna.grover5@gmail.com}}
		\item {Department of Computer Science, University of Delhi, India.\\
			\texttt{ngupta@cs.du.ac.in}}
		\item {Department of Computer Science, University of Maryland, USA. \\
		\texttt{samir@cs.umd.edu}}
		\item {Department of Computer Science, University of Delhi, India.\\
			\texttt{apancholi@cs.du.ac.in}}
	\end{enumerate}







%
	\begin{abstract}
		In this paper, we give the first constant factor approximation algorithm for capacitated knapsack median problem (CKnM) for hard uniform capacities, violating the budget by a factor of $1+\epsilon$ and capacities by a $2+\epsilon$ factor. To the best of our knowledge, no constant factor approximation is known for the problem even with capacity/budget/both violations. Even for the uncapacitated variant of the problem, the natural LP is known to have an unbounded integrality gap even after adding the covering inequalities to strengthen the LP. 
		Our techniques for CKnM provide two types of results for  the capacitated $k$-facility location problem. We present an $O(1/\epsilon^2)$ factor approximation for the problem, violating capacities by $(2+\epsilon)$. Another result is an $O(1/\epsilon)$ factor approximation, violating the capacities by a factor of at most $(1 + \epsilon)$ using at most $2k$ facilities for a fixed $\epsilon>0$.
		As a by-product, a constant factor approximation algorithm for capacitated facility location problem with uniform capacities is presented, violating the capacities 
		by ($1 + \epsilon$) factor. Though constant factor results are known for the problem without violating the capacities, the result is interesting as it is obtained by rounding the solution to the natural LP, which is known to have an unbounded integrality gap without violating the capacities. Thus, we achieve the best possible from the natural LP for the problem. The result shows that the natural LP is not too bad.								
		
		\textbf{keywords:} Capacitated Knapsack Median, Capacitated $k$ -Facility Location
		
	\end{abstract}
	
\section{Introduction}

\label{intro}

Facility location  and $k$-median problems are well studied in the literature. 
In this paper, we study some of their generalizations. In particular, we study capacitated variants of the knapsack median problem (KnM) and the $k$ facility location problem ($k$FLP).
Knapsack median problem is a generalization of the $k$-median problem, in which we are given a set $\clientset$ of clients with demands, a set $\facilityset$ of facility locations and a budget $\budget$. 
Setting up a facility at location $i$ incurs cost $f_i$ (called the {\em facility opening cost} or simply the {\em facility cost} ) and servicing a
client $j$ by a facility $i$ incurs cost $\dist{i}{j}$ (called the {\em service cost}). 
We assume that the costs are metric i.e., they satisfy the triangle inequality.
The goal is to select the locations to install facilities, so that the total cost for setting up the facilities does not exceed $\budget$ and  the cost of servicing all the clients by the opened facilities is minimized. When $f_i =1\ \forall i \in \facilityset$ and $\budget =k$, it reduces to the $k$-median problem.
In the {\em capacitated} version of the problem, we are also given a bound $\capacity_i$ on the maximum number of clients that facility $i$ can serve.  Given a set of open facilities, an assignment problem is solved to determine the best way of servicing the clients.
Thus any solution is completely determined
by the set of open facilities. In this paper, we address the capacitated knapsack median (CKnM) problem with uniform capacities i.e., $u_i = u~\forall i \in \facilityset$ and clients with unit demands.
In particular, we present the following result:

\begin{theorem}
	\label{theorem1}
	There is a polynomial time algorithm that approximates hard uniform capacitated knapsack median problem within a constant factor violating the capacity by a factor of at most $(2 + \epsilon)$ and budget by a factor of at most $(1 + \epsilon)$, for every fixed $\epsilon > 0$.
\end{theorem}

Our result is nearly the best achievable from rounding the natural LP: we cannot expect to get rid of the violation in the budget as it would imply a constant factor integrality gap for the uncapacitated case which is known to have an unbounded integrality gap. Even with budget violation, capacity violation cannot be reduced to below $2$ as it would imply less than $2$ factor capacity violation for $k$-median problem with $k + 1$ facilities. The natural LP has an unbounded integrality gap for this scenario as well\footnote{Let $M$ be a large integer, $u_i=M$ and $k = 2 M - 2$. There are  $M$ groups of locations; distance between locations within a group is $0$ and distance between locations in two different groups is $1$. Each group has $2 M - 2$ facilities and $2 M - 2$ clients, all co-located. In an optimal LP solution each facility is opened to an extent of $1/M$ thereby creating a capacity of $2 M - 2$ within each group. In an integer solution, if at most $k + 1 = 2 M - 1$ facilities are allowed to be opened then there is at least one group with only one  facility opened in it. Thus capacity in the group is $M$ whereas the demand is $2 M - 2$. Thus the blowup in capacity is $(2M - 2)/M$.} \footnote{We thank Moses Charikar for providing the above example where violation in one of the parameters is less than $2$ factor and no violation in the other. The example was subsequently modified by us to allow $k + 1$ facilities.}.

The $k$-facility location problem ($k$FLP) is a common generalization of the facility location problem and the $k$-median problem. In $k$FLP,  we are given a bound $k$ on the maximum number of facilities that can be opened (instead of a budget on the total facility opening cost) and the objective is to minimize the total of facility opening cost and the cost of servicing the clients by the opened facilities.
In particular we present the following two results:

\begin{theorem}
	\label{theorem2}
	There is a polynomial time algorithm that approximates hard uniform capacitated $k$-facility location problem within a constant factor $(O(1/\epsilon^2))$ violating the capacities by a factor of at most $(2 + \epsilon)$ for every fixed $\epsilon > 0$.
\end{theorem}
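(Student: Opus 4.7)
The plan is to round the natural LP relaxation of CFLP by a filter-cluster-bundle approach analogous to the one used for CKM in the preceding theorem, simplified by the absence of a knapsack budget. First, I would solve the standard LP to obtain a fractional solution $(x^*,y^*)$ of cost at most $\mathrm{OPT}$. For each client $j$, set $C^*_j = \sum_i c(i,j) x^*_{ij}$ and let $B_j$ denote the set of facilities within distance $(1+1/\epsilon)\,C^*_j$ of $j$. Markov's inequality gives $\sum_{i \in B_j} x^*_{ij} \ge 1/(1+\epsilon)$, so after scaling both $x^*$ and $y^*$ by $(1+\epsilon)$ and discarding mass outside the balls, every client is fully served from within its own ball; the per-client fractional connection cost remains $O(C^*_j)$ and the fractional opening cost grows by only a $(1+\epsilon)$ factor.

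Next, I would cluster the clients greedily in non-decreasing order of $C^*_j$. Each selected client $j$ defines a bundle $\bundle{j} \subseteq B_j$ of total scaled $y^*$-mass at most $1$, disjoint from every earlier bundle; any unselected client $j'$ whose ball intersects an earlier bundle $\bundle{j}$ is \emph{assigned} to that bundle. Because the ordering forces $C^*_{j'} \ge C^*_j$, the triangle inequality bounds $c(j',i) = O(C^*_{j'}/\epsilon)$ for every $i \in \bundle{j}$. In each bundle I would open the single facility of smallest $f_i$; a standard cost accounting then yields facility cost $O\!\left(\sum_i f_i y^*_i\right)$ and connection cost $O(1/\epsilon)\sum_j C^*_j$, for an overall $O(1/\epsilon)$ approximation in objective value.

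The main obstacle will be proving that each opened facility is overloaded by at most a factor $(1+\epsilon)$, since a naive clustering might heap up to two bundles' worth of demand onto a single facility. To control this I would split every assigned client $j'$ into two pieces according to how its fractional mass was originally served: the mass that $j'$ already sent to facilities in $\bundle{j}$ is routed to the single opened facility of $\bundle{j}$, while the complementary mass is routed through the bundles that actually served $j'$ fractionally. Since the $y^*$-mass of each bundle is at most $1$ and capacities are uniform, the first type of demand contributes at most $u$ to any opened facility; the $(1+\epsilon)$ scaling of $y^*$ provides the extra $\epsilon u$ of slack that absorbs the second type. A flow-feasibility (equivalently, bipartite matching) argument on the resulting assignment graph then lets me redistribute demands so that no opened facility receives more than $(1+\epsilon)\,u$ units, completing the proof.
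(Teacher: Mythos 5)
Your approach diverges substantially from the paper's, and unfortunately has a critical structural gap. The paper's CFLP algorithm reuses the CKM machinery: it forms a Voronoi-style partition of \emph{all} facilities into clusters $\bundle{j}$ centred at filtered clients, and then --- this is the crucial step --- classifies each cluster as \emph{sparse} ($d_j < u$) or \emph{dense} ($d_j \geq u$). Sparse clusters get a single cheapest facility, which is fine because their consolidated demand fits in one facility. Dense clusters, however, get an auxiliary LP (the ``cluster instance'' $LP_4$), a truncation $z_i = l_i/u$, a greedy transfer to an \emph{almost-integral} solution opening roughly $d_j/u$ facilities, and then a final rounding step (Lemma~\ref{io-capkflp-d}) that introduces the $(1+\epsilon)$ capacity loss by either absorbing a small fractional facility ($z'_{i_1}<\epsilon$) into a neighbour at $(1+\epsilon)u$ load, or fully opening a not-so-small fractional facility ($z'_{i_1}\geq\epsilon$) at a cost blow-up of $1/\epsilon$. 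Both the $(1+\epsilon)$ capacity factor and the $O(1/\epsilon)$ cost factor originate there.

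Your scheme instead forms disjoint bundles of scaled $y^*$-mass at most $1$ around selected clients and opens \emph{one} facility per bundle. This cannot supply enough capacity. A cluster/bundle may attract arbitrarily many assigned clients, so the total demand that must be routed to a bundle's single opened facility is unbounded relative to $u$; scaling $y^*$ by $(1+\epsilon)$ gives only an additive $\epsilon u$ of slack and does not change the fact that one facility of capacity $(1+\epsilon)u$ cannot absorb a dense cluster's worth of demand. You try to route the ``complementary mass'' of an assigned client $j'$ ``through the bundles that actually served $j'$ fractionally,'' but those facilities need not lie in any bundle at all: your bundles cap the mass at $1$ and are restricted to balls, so a substantial portion of the $y^*$-support is simply discarded. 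The LP demand that was routed to those discarded facilities has no opened destination, and the flow/matching argument you invoke has no feasibility guarantee. In short, the sparse/dense dichotomy and the multi-facility rounding inside dense clusters are not optional refinements --- they are what makes the paper's capacity bound true --- and your proposal is missing both. To repair it you would need, at minimum, to (i) cover every positively opened facility by the partition so no $y^*$-mass is dropped, and (ii) open $\Theta(d_j/u)$ facilities inside any cluster whose consolidated demand exceeds $u$, with a rounding rule whose only slack is a single fractional facility per cluster.
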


\begin{theorem}
	~\label{C$k$FLP-card}
	\label{theorem3}
	There is a polynomial time algorithm that approximates hard uniform capacitated $k$-facility location problem within a constant factor $(O(1/\epsilon))$ violating the capacity by a factor of at most $(1 + \epsilon)$ using at most $2k$ facilities for every fixed $\epsilon> 0$.
\end{theorem}

As a particular case of C$k$FLP, we obtain the following interesting result for the capacitated facility location problem (CFLP):

\begin{corollary}
	There is a polynomial time algorithm that approximates hard uniform capacitated facility location problem within a constant factor $(O(1/\epsilon))$ violating the capacity by a factor of at most $(1 + \epsilon)$ for every fixed $\epsilon> 0$.
\end{corollary}

The standard LP is known to have an unbounded integrality gap for CFLP even with uniform capacities. Though constant factor results are known for the problem without violating the capacities~\cite{mathp,Anfocs2014}, our result is interesting as it is obtained by rounding the solution to the natural LP.
Our result shows that the natural LP is not too bad. 

\subsection{Motivation and Challenges}

The natural LP for KnM is known to have an unbounded integrality gap~\cite{charikar2005improved} even for the uncapacitated case.
Obtaining a constant factor approximation for the (capacitated) $k$-median (CkM) problem 
is still open, let alone the CKnM problem. Existing solutions giving constant-factor approximation for CkM violate at least one of the two ($cardinality$ and $capacity$) constraints. Natural LP is known to have an unbounded integrality gap when any one of the two constraints is allowed to be violated by a factor of less than $2$ without violating the other.

Several results~\cite{charikar,Charikar:1999,capkmByrkaFRS2013,capkmshanfeili2014,KPR,capkmGijswijtL2013} have been obtained for CkM that violate either the capacities or the cardinality by a factor of $2$ or more. 
The techniques used for CkM cannot be used for CKnM as they work by transferring the opening from one facility to another (ensuring bounded service cost) facility thereby maintaining the cardinality within claimed bounds. This works well when there are no facility opening costs or the (facility opening) costs are uniform. For the general opening costs, this is a challenge as a facility, good for bounded service cost, may lead to budget violation. To the best of our knowledge, capacitated knapsack median  problem has not been addressed earlier.

C$k$FLP is NP-hard even when there is only one client and there are no facility costs ~\cite{capkmGijswijtL2013}. The hardness results for CkM hold for C$k$FLP as well. On the other hand, standard LP for capacitated facility location problem (CFLP) has an unbounded integrality gap, thereby implying that constant integrality ratio can not be obtained for C$k$FLP without violating the capacities even if $ k = n$. Byrka \etal ~\cite{capkmByrkaFRS2013} gave an  $O(1/\epsilon^2)$ algorithm for C$k$FLP when the capacities are uniform (UC$k$FLP)  violating the capacities by a factor of $2+\epsilon$. They use randomized rounding to bound the expected cost. It can be shown that deterministic pipage rounding cannot be used here. The strength of our techniques is demonstrated in obtaining  the first deterministic constant factor approximation with the same capacity violation. 
The primary source of inspiration for our result in Theorem~\ref{C$k$FLP-card} comes from its corollary. 
\subsection{Related Work}


Capacitated $k$-median problem has been studied extensively in the literature. For the case of uniform capacities, several results~\cite{capkmByrkaFRS2013,charikar,Charikar:1999,capkmshanfeili2014,KPR} have been obtained that violate either the capacities or the cardinality by a factor of $2$ or more. In case of non-uniform capacities, a $(7 + \epsilon)$ algorithm was given by Aardal \etal ~\cite{capkmGijswijtL2013} violating the cardinality constraint by a factor of $2$ as a special case of Capacitated $k$-FLP when the facility costs are all zero. Byrka \etal~\cite{capkmByrkaFRS2013} gave an $O(1/\epsilon)$ approximation result violating capacities by a factor of $(3+\epsilon)$.

Li~\cite{capkmshili2014} broke the barrier of $2$  in cardinality and gave an $\exp(O (1/\epsilon^2))$ approximation using at most $(1 + \epsilon) k$ facilities for uniform capacities. Li 
gave a sophisticated algorithm using a novel linear program which he calls the \emph{rectangle LP}. The result was extended to non-uniform capacities  by the same author using a new LP called \emph{configuration LP} ~\cite{Lisoda2016}. The approximation ratio was also improved from  $\exp (O(1/\epsilon^2))$ to $(O(1/\epsilon^2 \log (1/\epsilon)))$. Though the algorithm violates the cardinality only by $1 + \epsilon$, it introduces a softness bounded by a factor of $2$. The running time of the algorithm is $n^{O(1/\epsilon)}$. 

Byrka \etal ~\cite{ByrkaRybicki2015} broke the barrier of $2$ in capacities and  gave an  $O(1/\epsilon^2)$ approximation violating capacities by a factor of $(1 + \epsilon)$ factor for uniform capacities. The algorithm uses randomized rounding to round a fractional solution to the configuration LP. For non-uniform capacities, a similar result has been obtained by Demirci \etal\ ~\cite{Demirci2016}. The paper presents an $O(1/\epsilon^5)$ approximation algorithm with capacity violation by a factor of at most $(1 + \epsilon)$. The running time of the algorithm is $n^{O(1/\epsilon)}$. 

Another closely related problem to Capacitated $k$-median problem is the Capacitated $k$-center problem, where-in we have to minimize the maximum distance of a client to a facility. A 6 factor approximation algorithm was given by Khuller and Sussmann \cite{k-centerKhuller2000} for the case of uniform hard capacities (5 factor for soft capacitated case). For non-uniform hard capacities, Cygan \etal~\cite{CyganFOCS2012} gave the first constant approximation algorithm for the problem, which was further improved by An \etal~in \cite{AnMP2015} to 9 factor. 

Though the knapsack median problem (a.k.a. weighted $W$-median) is a well motivated problem and occurs naturally in practice, not much work has been done on the problem. 
Krishnaswamy \etal ~\cite{Krishnaswamysoda2011} showed that the integrality gap, for the uncapacitated case, holds even on adding the covering inequalities to strengthen the LP, and gave a $16$ factor approximation that violates the budget constraint by a factor of ($1+\epsilon$). 
Kumar~\cite{Amitsoda2012} strengthened the natural LP by obtaining a bound on the maximum distance a client can travel and gave first constant factor approximation without violating the budget constraint.
Charikar and Li ~\cite{Charikaricalp2012} reduced the large constant obtained by Kumar to $34$ which was further improved to $32$ by Swamy \cite{Swamyapprox2014}. Byrka \etal ~\cite{Byrkaesa2015} extended the work of Swamy 
and applied sparsification as a pre-processing step to obtain a factor of $17.46$. The result was further improed to $7.081(1+\epsilon)$ very recently by Krishnaswamy \etal~\cite{krishnaswamySTOC18} using iterative rounding technique, with a running time of $n^{O(1/\epsilon^2)}$.



For C$k$FLP, Aardal \etal ~\cite{capkmGijswijtL2013} extended the  FPTAS for knapsack problem to give an FPTAS for single client C$k$FLP. They also extend an $\alpha-$ approximation algorithm for (uncapacitated) $k$-median to give a $(2 \alpha + 1)-$ approximation for C$k$FLP  with uniform opening costs using at most $2k$ for non-uniform and $2k - 1$ for uniform capacities. Byrka \etal~\cite{capkmByrkaFRS2013} gave an $O(1/\epsilon^2)$ factor approximation violating the capacities by a factor of $(2 + \epsilon)$ using dependent rounding.

For CFLP, An, Singh and Svensson~\cite{Anfocs2014} gave the first LP-based constant factor approximation by strengthening the natural LP. Other LP-based algorithms known for the problem are due to Byrka \etal ~and Levi \etal ~(\cite{capkmByrkaFRS2013,LeviSS12}). The local search technique has been particularly useful to deal with capacities. The approach provides $3$ factor for uniform capacities~\cite{mathp} and $5$ factor for the non-uniform case~\cite{Bansal}.  

\subsection{Our techniques}

We extend the work of Krishnaswamy \etal~\cite{Krishnaswamysoda2011} to capacitated case. The major challenge is in writing the LP which opens sufficient number of facilities for us in bounded cost.

Filtering and clustering techniques~\cite{Lin92,Charikar:1999,LeviSS12,Shmoys,capkmByrkaFRS2013,Krishnaswamysoda2011,capkmGijswijtL2013} are used to partition the set of facilities and demands. Routing trees are used to bound the assignment costs. Main contribution of this work is a new LP and an iterative rounding algorithm to obtain a solution with at most two fractionally opened facilities.

{\bf High Level Ideas:} We first use the  filtering and clustering techniques 
to partition the set of facilities and demands. Each partition ({\em called cluster}) has sufficient opening ($\ge 1 - 1/\ell \ge 1/2$) for a fixed parameter $\ell \ge 2$ in it. An integrally open solution is obtained where-in some clusters have at least $1$ integrally opened facility and some do not have any facility opened in them. To assign the demand of the cluster that cannot be satisfied locally within the cluster,   a (directed) rooted binary routing tree is constructed, on the cluster centers. If $(s, t)$ is an edge in the routing tree then the cost of sending the unmet demand of the cluster centered at $s$ to $t$ is bounded. The edges of the tree have non-increasing costs as we go up the tree, with the root being at the top. Hence the cost of sending the unmet demand of the cluster centered at $s$ to any node $r$ up in the tree at a constant number of edges away from $s$ is bounded.
	
In order to decide which facilities to open integrally, clusters are grouped into meta-clusters of size (the number of clusters in it)  $\ell$ so as to have at least $\ell - 1$ opening in it. 
The routing tree is used to group the clusters into meta-clusters (MCs) in a top-down greedy manner, i.e., starting from the root, a meta-cluster grows by including the cluster (center) that connects to it by the cheapest edge. 
A MC grows until its size reaches $\ell$. We then proceed to make a new MC from the tree with the remaining nodes in the same greedy manner. 
This imposes a natural directed (not necessarily binary) rooted tree structure 
on the meta-clusters with the property that the edge going out of a MC is cheaper than the edges inside the MC which are further cheaper than the edges coming into the MC. Out-degree of a MC is $1$ whereas the in-degree is at most $q + 1$ where $q$ is the number of clusters in a MC.

 Next, we write a new LP to open sufficient number of facilities within each cluster and each MC. We also give an iterative rounding algorithm to solve the LP, removing the integral variables and updating the constraints accordingly in each iteration until either all the variables are fractional or all are integral. In case all the variables are fractional, we use the property of extreme point solutions to claim that the number of non-integral variables is at most two. 
Thus we obtain a solution to the LP with at most two fractional openings. 
Both the fractionally opened facilities are opened integrally at a loss of additive $f_{max}$ in the budget where $f_{max}$ is the maximum facility opening cost \footnote{Let $F'$ be the set of facilities $i$ with $f_i  > \epsilon\cdot\budget$. Enumerate all possible subsets of $F'$ of $size <= 1/\epsilon$. There are at most $n ^ {O(1/\epsilon)}$ such sets. For each such set $S$, solve the LP with  $y_i = 1\ \forall\ i \in S$ and $y_i = 0\ \forall\ i \in F' \setminus S$. The additive $f_{max}$ (which comes from the fractionally opened facilities) is $< =   \epsilon\cdot\budget$. Choose the best solution and hence theorem $1$ follows.}.

 Finally a min-cost flow problem is solved with capacities scaled up by a factor of $(2 + \epsilon)$ to obtain an integral assignment. A feasible solution to the min-cost flow problem of bounded cost is obtained as follows:
consider a scenario in which the demand accumulated within each cluster is less than $\capacity$ (we call such clusters as {\em sparse}). For the sake of easy exposition of the ideas, let each MC be of size exactly $\ell$. The LP solution opens at least $\ell - 1$ facilities integrally in each MC, with at least one facility in each cluster except for one cluster. If the cluster with unmet demand is at the root of the induced subgraph of the MC, then its demand cannot be met within the MC. We make sure that such a demand is served in the parent MC. Total demand to be served by the facilities in a MC is at most $\ell \capacity$ plus at most $(\ell + 1) \capacity$ coming from the children of the MC. Thus $(\ell- 1)$ facilities have to serve at most $(2\ell + 1)\capacity$ demand leading to a violation of $(2 + O(1/\ell))$ in capacity. Demands have to travel $O(\ell)$ edges upwards (at most $\ell$  within its own MC and at most $\ell$ in the parent MC), and hence the cost of serving them is bounded.

 The situation becomes a little tricky when there are clusters with more than $\capacity$ demand (we call such clusters as {\em dense}). One way to deal with dense clusters is to open $\floor{demand/\capacity}$ facilities integrally within such a cluster and assign the residual demand to one of them at a capacity violation of $2$. But if this cluster also has to serve $u$ units of unmet demand of one of its children (we will see later that a dense cluster has at most one child), the capacity violation could blow upto $3$ in case $\floor{demand/\capacity} = 1$. We deal with this scenario carefully.

\section{Capacitated Knapsack Median Problem}
\label{CKnM}
In this section, we consider the capacitated knapsack median problem.
CKnM can be formulated as the following integer program (IP):
\begin{eqnarray} \label{{unif-KnM}}
	Minimize & \mathcal{C}ostKnM(x,y) = \sum_{j \in \cliset}\sum_{i \in \facilityset}\dist{i}{j}x_{ij} \nonumber \\ 
	subject~ to &\sum_{i \in \facilityset}{} x_{ij} = 1 & \forall ~\singleclient \in \clientset \label{LPKMP_const1}\\ 
	& \sum_{j \in \cliset}{} x_{ij} \leq \capacity ~ y_i & \forall~ \singlefacility \in \facilityset \label{LPKMP_const4}\\ 
	& x_{ij} \leq y_i & \forall~ \singlefacility \in \facilityset , ~\singleclient \in \clientset \label{LPKMP_const2}\\   
	&\sum_{i \in \facilityset}{} f_i y_{i}\leq \budget\label{LPKMP_const3} \\
	& y_i,~x_{ij} \in \left\lbrace 0,1 \right\rbrace  \label{LPKMP_const5}
\end{eqnarray}
LP-Relaxation of the problem is obtained by allowing the variables $ y_i, x_{ij} \in [0, 1]$. Call it $LP_1$.
To begin with, we guess the facility with maximum opening cost, $f^*_{max}$, in the optimal solution and remove all the facilities with facility cost $ > f^*_{max}$ before applying the algorithm. For the easy exposition of ideas, we will give a weaker result, in section \ref{3factor}, in which we violate capacities by a factor of $3$. Most of the ideas are captured in this section.

\subsection{Simplifying the problem instance}
\label{simplify}

We first simplify the problem instance by
partitioning the sets of facilities and clients into clusters. This is achieved using the filtering technique of Lin and Vitter~\cite{Lin92}. 
For an LP solution $\rho = <x, y>$ and a subset $T$ of facilities, let $ \cardTwo{y}{T} = \sum_{i \in T}{} y_i$ denote the total extent up to which facilities are opened in $T$ under $\rho$.

{\bf Partitioning the set of facilities  into clusters and sparsifying the client set :}
Let $\rho^{*} = <x^*, y^*>$ denote the optimal $LP$ solution. Let $ \C{j} $ denote the average connection cost of a client \singleclient \ in $\rho^*$ i.e.,~$ \C{j} = \sum_{i \in \facilityset}{} x^*_{ij}\dist{i}{j}$. Let $\ell \geq 2$ be a fixed parameter  and $\ballofj{j}$ be the set of facilities within a distance of $ \ell\C{j}$ of $j~ i.e., \ \ballofj{j}  = \{ \singlefacility \in \facilityset \colon \dist{i}{j} \leq \ell\C{j} \} $ (Figure \ref{fig-clustering-a}(a)). Then, $ \cardTwo{y^*}{\ballofj{j}} \ge 1-\frac{1}{\ell}$. 
Let $\mathcal{R}_j = \ell \C{j}$ denote the \textit{radius} of $\ballofj{j}$. We identify a set $\clientset'$ of clients ( Figure~\ref{fig-clustering-a}(b)) which will serve as the centers of the clusters using Algorithm 1. 
Note that  $\ballofj{j'}  \subseteq \bundle{j'}$ and the sets $\bundle{j'}$ partition $\facilityset$. (Figure~\ref{fig-clustering-c}(b)).

\begin{algorithm} \label{algo_cc}
	\begin{algorithmic}[1]
		\STATE $\clientset' \gets \emptyset$, $S \gets \clientset$, $ctr(j)=\emptyset\ \forall j \in S$.
		\WHILE {$S \ne \emptyset$}
		\STATE Pick $j' \in S$ with the smallest radius $\mathcal{R}_{j'}$ in $S$, breaking ties arbitrarily.
		\STATE $S \gets S \setminus \{j'\}$, $\cliset' \gets \cliset' \cup \{j'\}$
		\WHILE {$\exists j \in S$: $\dist{j'}{ j} \leq 2\ell \C{j}$}
		\STATE $S \gets S \setminus \{j\}$, $ctr(j)={j'}$
		\ENDWHILE 
		\ENDWHILE 
		\STATE $\forall j' \in \cliset'$: let $ \bundle{j'} = \lbrace \singlefacility \in \facilityset \mid \forall k'\in \clientset' \colon j' \ne k'\Rightarrow \dist{i}{j'} < \dist{i}{k'} \rbrace $
	\end{algorithmic}
	\label{alg1}
	\caption{Cluster Formation}
\end{algorithm}


{\bf Partitioning the demands: }
Let $\load{i}$ denote the total demand of clients in $\clientset$ serviced by facility $i$ 
i.e., ~$\load{i} = \sum_{ j \in \clientset}{} x^{*}_{i j}$ and, $d_{j'} = \sum_{i \in \bundle{j'}}{} \load{i}$ for $j' \in \cliset'$. Move the demand $d_{j'}$ to the center $j'$ of the cluster (Figures~\ref{fig-clustering-a}-(b) and \ref{fig-clustering-c}-(a)).
For $ j \in \clientset$, let $\A{ j}{\neighbor{j'}}$ denote the total extent upto which $ j$ is served by the facilities in $\bundle{j'}$. Then, we can also write $d_{j'} = \sum_{ j \in \clientset}\A{ j}{\neighbor{j'}}$. Thus,
after this step, unit demand of any $ j \in \clientset$, is distributed to centers of all the clusters whose facilities serve $ j$.  In particular, it takes care of the demand of the clients that were removed during sparsification. Each cluster center is then responsible for the portion of demand of $ j \in \clientset$ served by the facilities in its cluster. 

\begin{figure}[t]
	\begin{tabular}{cc}
		\includegraphics[width=55mm,scale=0.5]{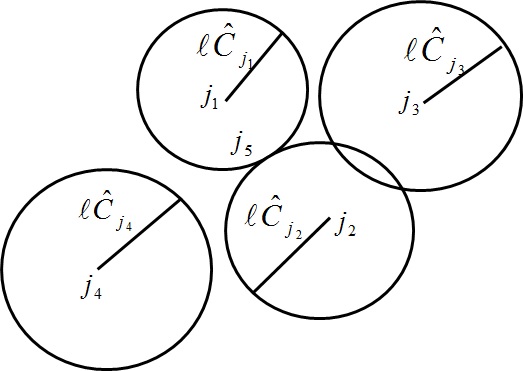}
		& \hspace{1cm}
		\includegraphics[width=55mm,scale=0.5]{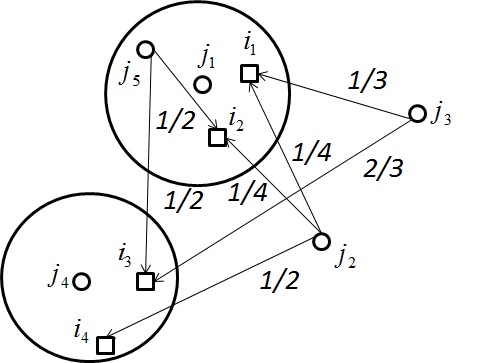}
		\\ 
		(a) & \ \  (b) 
	\end{tabular}
\caption{(a) The balls around the clients.
	(b) Reduced set of clients and assignment by LP solution.
	}
	\label{fig-clustering-a}
	\label{fig-clustering-b}
\end{figure}

\begin{figure}[t]
	\begin{tabular}{cc}
		\includegraphics[width=55mm,scale=0.5]{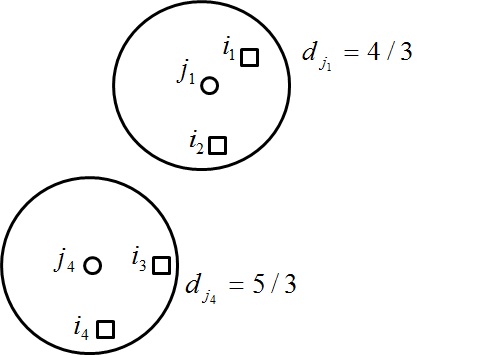}
		& \hspace{1cm}
		\includegraphics[width=55mm,scale=0.5]{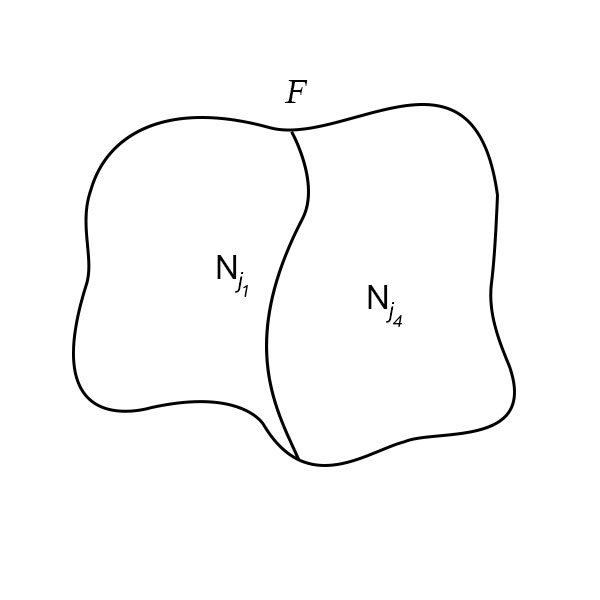}
		\\ 
		(a) & \ \ (b) 
	\end{tabular}
	\caption{(a) Partitioning of demand.
		(b) Partition of $\mathcal{F}$.
	}
	\label{fig-clustering-c}
	\label{fig-clustering-d}
\end{figure}

The cost of moving the demand $d_{j'}$ to $j'$ is bounded by $2(\ell+1)LP_{opt}$ as shown in Corollary \ref{cor2}. 
Also, any two cluster centers $j'$ and $k'$ satisfy the {\em separation property}: $\dist{j'}{k'} > 2\ell~max\{ \C{j'},\C{k'}\}$.  In addition, they satisfy Lemmas (\ref{dist-bd1}), (\ref{rad-bd}) and (\ref{eq0}). 

\begin{lemma} 
	\label{dist-bd1}
	Let $j' \in \clientset'$ and  $i \in \bundle{j'}$ then,
	($i$) For $k'\in \clientset'$, $\dist{j'}{k'}\le 2\dist{i}{ k'}$,
	($ii$) For $ j\in \cliset \setminus \clientset'$, $\dist{j'}{ j} \le 2\dist{i}{ j} + 2 \ell\C{ j}$ and ($iii$) For $ j\in \clientset$, $\dist{i}{j'} \leq \dist{i}{ j}+2\ell\C{ j}$. 
\end{lemma}
\begin{proof}

$i)$ By triangle inequality, $c(j', k') \leq c(i,j') + c(i, k')$. Since $i \in \bundle{j'} \Rightarrow c(i,j') \leq c(i,k')$ and hence $c(j', k') \leq 2 c(i, k')$.

$(ii)$ Since $ j \notin \clientset'$, there exist a client $k' \in \clientset'$ such that $ctr( j) = k'$ and $c( j,k') \leq 2\ell\C{ j}$. 
Also, If $k' = j'$ then $c(i,j') = c(i,k')$ else $c(i,j') \leq c(i,k')$ because $i \in \bundle{j'}$ and not $\bundle{k'}$.
Then, by triangle inequality, $c(i,k') \leq c(i, j) + c( j,k') \leq c(i, j) + 2\ell\C{ j} = c(i, j) + 2\mathcal{R}_{ j}$. Therefore, $c(j', j) \leq c(i,j') + c(i, j) \leq 2c(i, j) + 2\mathcal{R}_{ j}$.

$(iii)$ Consider two cases: $ j\in \clientset'$ and $ j  \notin \clientset'$. In the first case, $c(i,j') \leq c(i, j)$ because $i \in \bundle{j'}$ and not $\bundle{ j}$  and hence $c(i,j') \leq c(i, j) + 2\ell\C{ j}$. In the latter case, by triangle inequality we have, $c(i,j') \leq c(i, j) + c(j', j)$. Since $ j \notin \clientset' \Rightarrow  c(j', j) \leq 2\ell\C{ j}$. Thus, $c(i,j') \leq c(i, j) + 2\ell\C{ j}$.
\end{proof}

\begin{corollary} \label{cor2}
	$\sum_{ j \in \cliset}{}\sum_{j' \in \cliset'}{}\dist{j'}{ j}\A{ j}{\neighbor{j'}} \le 2(\ell+1)LP_{opt}$.
\end{corollary}

\begin{lemma}
	\label{rad-bd}
	Let $ j \in \clientset \setminus \clientset'$ and $j' \in \clientset'$ such that $\dist{j'}{ j} \leq \mathcal{R}_{j'}$, then $\mathcal{R}_{j'} \le 2\mathcal{R}_{ j}$.
\end{lemma}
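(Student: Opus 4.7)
The plan is to exploit the greedy construction of $\clientset'$: clients are picked as centers in non-decreasing order of their radii, and every non-center client $j'$ has a well-defined ``killer'' $j'' := ctr(j') \in \clientset'$ whose selection removed $j'$ from $S$. The removal rule then guarantees $\dist{j''}{j'} \leq 2l\C{j'} = 2\mathcal{R}_{j'}$. I would use this killer, the sorted processing order, and one triangle inequality to get the bound.

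I would split on the relationship between $j$ and $j''$. If $j'' = j$, then $j$ was picked strictly before $j'$ was removed, so by the ordering $\mathcal{R}_j \leq \mathcal{R}_{j'}$, which is stronger than required. Suppose instead $j'' \neq j$ but $j$ is picked before $j''$. Then $j'$ is still in $S$ at the moment $j$ becomes a center and the removal rule declines to evict $j'$ (else we would have $ctr(j') = j$), so $\dist{j}{j'} > 2\mathcal{R}_{j'}$; combined with the hypothesis $\dist{j}{j'} \leq \mathcal{R}_j$ this forces $\mathcal{R}_j > 2\mathcal{R}_{j'} \geq \mathcal{R}_{j'}$. But then $j'$ has strictly smaller radius than $j$ and is in $S$ at the moment $j$ is chosen, contradicting the fact that we always pick the smallest-radius survivor. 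So this sub-case cannot occur.

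The remaining case, $j'' \neq j$ with $j''$ selected before $j$, is the only quantitative one. Here $j$ lies in $S$ when $j''$ is picked; since $j$ eventually becomes a center and so is never removed, the removal rule must fail on the pair $(j'', j)$, yielding $\dist{j''}{j} > 2l\C{j} = 2\mathcal{R}_j$. On the other hand, the triangle inequality together with $\dist{j''}{j'} \leq 2\mathcal{R}_{j'}$ and the hypothesis $\dist{j'}{j} \leq \mathcal{R}_j$ gives $\dist{j''}{j} \leq 2\mathcal{R}_{j'} + \mathcal{R}_j$. Chaining these two estimates yields $2\mathcal{R}_j < 2\mathcal{R}_{j'} + \mathcal{R}_j$, i.e.\ $\mathcal{R}_j < 2\mathcal{R}_{j'}$.

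The main subtlety is handling the asymmetric removal threshold: the inequality $\dist{j}{j'} \leq 2l\C{j'}$ is phrased in terms of the radius of the removed client rather than of the newly picked center, so each ``failed removal check'' must be paired with the sorted-order assumption in just the right way to convert it into a comparison of radii. Beyond that, only the metric triangle inequality and the construction of $\clientset'$ are needed.
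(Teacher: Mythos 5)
Your argument is correct and lands on the same core inequality chain as the paper: the killer bound $\dist{j'}{ctr(j')} \le 2\mathcal{R}_{j'}$, a triangle inequality, and the separation condition between two distinct cluster centers. The difference is structural. The paper argues by contradiction and invokes the separation property as a black box, so it never needs to distinguish the relative selection order of $j$ and $ctr(j')$; however, it also tacitly assumes $ctr(j') \ne j$, a case the separation property cannot address since it compares distinct centers. You instead unfold the greedy construction, casing explicitly on whether $ctr(j')$ equals $j$, was selected before $j$, or after, and re-derive the one direction of the separation property you need from the pick order together with the failed-removal test $\dist{j''}{j} > 2l\C{j}$. Your case (a) plugs the $ctr(j') = j$ gap (using that radii are processed in non-decreasing order), your case (b) is ruled out outright by the hypothesis $\dist{j}{j'} \le \mathcal{R}_j$ rather than by a second appeal to separation, and your case (c) is the paper's inequality chain restated as a direct derivation rather than a proof by contradiction. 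Both approaches are sound; yours is longer but makes explicit a small omission in the paper's version.
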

\begin{proof} 
Suppose, if possible,  $\mathcal{R}_{j'} > 2\mathcal{R}_{ j}$. Let $ctr( j) = k' $. 
Then, $ \dist{ j}{k'} \le 2 \mathcal{R}_{ j}$. And, $\dist{k'}{j'} \le \dist{k'}{ j} + \dist{ j}{j'}$ $ \leq 2 \mathcal{R}_{ j} + \mathcal{R}_{j'} < 2\mathcal{R}_{j'} = 2\ell\C{j'}$, which is a contradiction to separation property.
\end{proof}


\begin{lemma}
	\label{eq0}
	$\sum_{j' \in {\cliset'}}{} d_{j'} 
	\sum_{i \in {\facset}}{}
	\dist{i}{j'}x^*_{ij'}
	\leq 3\sum_{j \in {\cliset}}{}
	\sum_{i \in {\facset}}{}	
	\dist{i}{j}x^*_{ij} = 
	3 LP_{opt}$.	
\end{lemma}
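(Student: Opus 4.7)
The plan is to reverse the order of summation in the LHS so that the outer sum runs over original clients $j'\in\cliset$, establish a pointwise bound on $\C{j(i)}$ (where $j(i)\in\cliset'$ is the center of the cluster $\bundle{j(i)}$ containing facility $i$) in terms of $\dist{i}{j'}$ and $\C{j'}$, and then aggregate. Unfolding $d_j=\sum_{i\in\bundle{j}}\sum_{j'\in\cliset}x^*_{ij'}$ and swapping sums,
\begin{equation*}
\sum_{j\in\cliset'}d_j\,\C{j}
= \sum_{j\in\cliset'}\C{j}\sum_{i\in\bundle{j}}\sum_{j'\in\cliset}x^*_{ij'}
= \sum_{j'\in\cliset}\sum_{i\in\facset}x^*_{ij'}\,\C{j(i)}.
\end{equation*}
It therefore suffices to show the pointwise inequality
\begin{equation*}
\C{j(i)} \;\le\; \frac{\dist{i}{j'}}{l} + 2\,\C{j'} \qquad(\star)
\end{equation*}
for every $j'\in\cliset$ and every $i$ with $x^*_{ij'}>0$; summing $(\star)$ against $x^*_{ij'}$ and using $\sum_i x^*_{ij'}=1$, $\sum_{j'}\C{j'}=LP_{opt}$, and $\sum_{j'}\sum_i x^*_{ij'}\,\dist{i}{j'}=LP_{opt}$, the LHS is bounded by $(1/l+2)\,LP_{opt}\le 3\,LP_{opt}$ for $l\ge 1$.

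The bound $(\star)$ follows from a four-way case analysis, depending on whether $j'\in\cliset'$ and, if not, whether $j(i)=ctr(j')$. If $j'\in\cliset'$ and $j(i)=j'$, then $\C{j(i)}=\C{j'}$ and $(\star)$ is immediate. If $j'\in\cliset'$ and $j(i)\ne j'$, both $j(i)$ and $j'$ are cluster centers; the separation property gives $\dist{j(i)}{j'}>2l\,\C{j(i)}$, while $i\in\bundle{j(i)}$ forces $\dist{i}{j(i)}\le\dist{i}{j'}$ by the definition of a bundle, so $2\,\dist{i}{j'}\ge\dist{j(i)}{j'}>2l\,\C{j(i)}$ and $(\star)$ holds with slack. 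If $j'\notin\cliset'$ and $j(i)=ctr(j')$, then $\C{j(i)}\le\C{j'}$: cluster centers are selected in non-decreasing order of radius $\mathcal{R}=l\,\C{\cdot}$, and $j'$ was still in $S$ at the moment $ctr(j')$ was chosen, so $\mathcal{R}_{ctr(j')}\le\mathcal{R}_{j'}$, which gives $(\star)$.

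The remaining case $j'\notin\cliset'$ with $k:=j(i)\ne m:=ctr(j')$ is the main obstacle. I would route the separation bound through $m$: the clustering rule yields $\dist{m}{j'}\le 2l\,\C{j'}$; the separation of centers gives $\dist{k}{m}>2l\,\C{k}$; combining, $\dist{k}{j'}\ge\dist{k}{m}-\dist{m}{j'}>2l\,\C{k}-2l\,\C{j'}$. On the other side, $i\in\bundle{k}$ implies $\dist{i}{k}\le\dist{i}{m}$, and hence by Lemma~\ref{dist-bd1}(iii) $\dist{i}{k}\le\dist{i}{j'}+2l\,\C{j'}$; a triangle step at $i$ then produces $\dist{k}{j'}\le\dist{i}{k}+\dist{i}{j'}\le 2\,\dist{i}{j'}+2l\,\C{j'}$. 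Comparing the two estimates on $\dist{k}{j'}$ and rearranging yields $\C{k}<\dist{i}{j'}/l+2\,\C{j'}$, which is $(\star)$. The difficulty here is precisely that $j'$ is not a cluster center, so the separation property cannot be invoked between $k$ and $j'$ directly; routing through the proxy $m=ctr(j')$ introduces the extra additive $2\,\C{j'}$ appearing in $(\star)$ and ultimately accounts for the $2\,LP_{opt}$ summand in the final bound.
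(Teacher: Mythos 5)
Your proof is correct, and it reaches the bound by a route that differs from the paper's in its case analysis. Both arguments share the same skeleton: rewrite $\sum_{j\in\cliset'}d_j\C{j}$ as a sum over assignment mass $x^*_{ij'}$ and charge each unit at most $\dist{i}{j'}/l+2\C{j'}$ (the paper effectively charges $\dist{i}{j'}+2\C{j'}$ after using $l\geq 2$). The paper, however, splits the pairs $(j,j')$ by the distance threshold $\dist{j}{j'}\lessgtr\R{j}$: for far pairs it uses $\C{j}<\dist{j}{j'}/l$ together with Lemma~\ref{dist-bd1}($ii$), and for near pairs it (implicitly, behind ``thus the claim follows'') invokes Lemma~\ref{rad-bd} plus the separation property to get $\C{j}\le 2\C{j'}$. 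You instead prove a single pointwise inequality $\C{j(i)}\le\dist{i}{j'}/l+2\C{j'}$ valid for every facility--client pair, with a case split driven by the combinatorics of the clustering (whether $j'$ is a center, and whether $j(i)=ctr(j')$), re-deriving the needed distance facts directly from the separation property, the greedy non-decreasing-radius selection (which gives the clean bound $\C{ctr(j')}\le\C{j'}$), and the nearest-center definition of $\bundle{j}$. What your version buys: it does not need Lemma~\ref{rad-bd} at all, it makes explicit the near-case bookkeeping that the paper leaves to the reader, and it yields the marginally sharper constant $2+1/l$; what the paper's version buys is brevity, since it reuses the already-stated Lemmas~\ref{rad-bd} and~\ref{dist-bd1} rather than reproving their content inside the case analysis.
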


\begin{proof}
$\sum_{j' \in {\cliset'}}{} d_{j'} 
\sum_{i \in {\facset}}{}
\dist{i}{j'}x^*_{ij'} = \sum_{j' \in {\cliset'}}{}
\big( \sum_{ j \in \cliset}{}
\A{ j}{\neighbor{j'}}\big) \hatofC{j'}$
\\
$= \sum_{j' \in {\cliset'}}{}
\big( \sum_{ j \in \cliset : c(j',  j) \leq \R{j'}}{}
\A{ j}{\neighbor{j'}} \hatofC{j'}
+ \sum_{ j  \in \cliset : c(j',  j) > \R{j'}}{}
\A{ j}{\neighbor{j'}} \hatofC{j'}\big)
$
\\
Second term in the sum on RHS $< \frac{1}{\ell} \sum_{j' \in {\cliset'}}{}
\sum_{ j  \in \cliset : c(j',  j) > \R{j'}}{}
\A{ j}{\neighbor{j'}} \dist{j'}{ j}$
\\ 
$\leq\frac{1}{\ell} \sum_{ j \in {\cliset}}{}
\sum_{j' \in \cliset': c(j',  j) > \R{j'}}{} \sum_{i \in \bundle{j'}}{}
x_{i j}^{*}(2\dist{i}{ j} + 2\ell \hatofC{ j})
$ as $\dist{j'}{ j} \leq 2\dist{i}{ j} + 2\ell \hatofC{ j}$ by Lemma~\ref{dist-bd1}
\\
$\leq  \sum_{ j \in {\cliset}}{}
\sum_{j' \in \cliset': c(j',  j) > \R{j'}}{} \sum_{i \in \bundle{j'}}{}
x_{i j}^{*}(\dist{i}{ j} + 2 \hatofC{ j})
$. Thus the claim follows. \end{proof}


Let $\cpoor$ be the set of cluster centers $j' \in \clientset'$ for which $d_{j'} < u$ and $\crich$ be the set of remaining centers in $\clientset'$. The clusters centered at $j' \in \cpoor$ are called {\em sparse} and those centered at $j' \in \crich$  {\em dense}. 
For $j' \in \cdense$, sufficient facilities are opened in $\bundle{j'}$ so that its entire demand is served within the cluster itself and we say that $j'$ is {\em self-sufficient}. Unfortunately, the same claim cannot be made for the sparse clusters i.e., we cannot guarantee to open even one facility in each sparse cluster (since $d_{j'} < u$, we  need only one facility in each sparse cluster  $j'$). Thus, in the next section, we define a routing tree that is used to route the unmet demand of a cluster to another cluster in bounded cost.

\subsection{Constructing the Binary Routing Tree}
\label{bin-tree}
First, we define a dependency graph $G = (V, E)$, similar to the one defined by Krishnaswamy et al \cite{Krishnaswamysoda2011}, on cluster centers, i.e., $V = \clientset'$. For brevity of notation, we use $j'$ to refer to the node corresponding to cluster center $j'$ as well as to refer to the cluster center $j'$ itself. 
For  $j' \in \csparse$, let $\eta(j')$ be the nearest other cluster center in $\cliset'$ of $j'$ i.e.,~$\eta(j') = k' (\ne j') \in \cliset': k'' \in \cliset'\Rightarrow \dist{j'}{ k'} \leq \dist{j'}{ k''}$ 
and for $j' \in \cdense,~\eta(j') = j'$. The dependency graph consists of directed edges $\dist{j'}{\eta(j')}$. Each connected component of the graph is a tree except possibly for a $2$-cycle at the root.  We remove any edge 
arbitrarily from the two cycle. The resulting graph is then a forest. Note that, there is at most one dense cluster in a component and if present, it must be the root of the tree. 
The following lemma will be useful to bound the cost of sending the unserved demand of $j' \in \cpoor$ to $\eta(j')$.

\begin{lemma}	
	\label{eq3-aditya}	
	\label{nn}					
	$\sum_{j' \in { \csparse}} \bard{j'} ( 
	\sum_{i \in {\bundle{j'}}}
	\dist{i}{j'} x^*_{ij'} +
	c(j',~\eta(j')) ( 1 - \sum_{i \in \bundle{j'}}x^*_{ij'})
	)
	\leq 6 LP_{opt}		
	$.
\end{lemma}

\begin{proof}
The second term of LHS 
$=\sum_{j' \in { \csparse}} \bard{j'} \big( 
\sum_{i \notin {\bundle{j'}}}
\dist{j'}{\eta(j')} x^*_{ij'}
\big) $

$\leq\sum_{j' \in { \csparse}} \bard{j'}\big( 
\sum_{ k' \in { \cliset'}: k'\neq j'~}{} 
\sum_{i \in \bundle{ k'}}{}
\dist{j'}{k'} x^*_{ij'}
\big)$

$ \leq
\sum_{j' \in { \csparse}} \bard{j'} \big( 
\sum_{k'\in { \cliset'}: k'\neq j'~}{} 
\sum_{i \in \bundle{ k'}}{}
2 \dist{i}{j'} x^*_{ij'}
\big)$.\end{proof} 

Unfortunately, the in-degree of a node in a tree may be unbounded and hence
arbitrarily large amount of demand may accumulate at a cluster center, which may further lead to unbounded capacity violation at the facilities in its cluster.

\begin{figure}[t]
	\begin{tabular}{ccc}
		\includegraphics[width=37mm,scale=0.5]{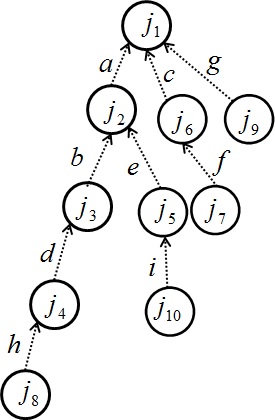}
		& 
		\includegraphics[width=40mm,scale=0.5]{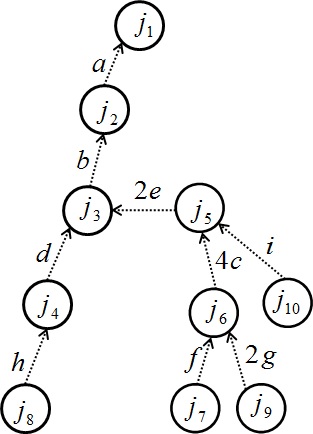}
		& 
		\includegraphics[width=50mm,scale=0.5]{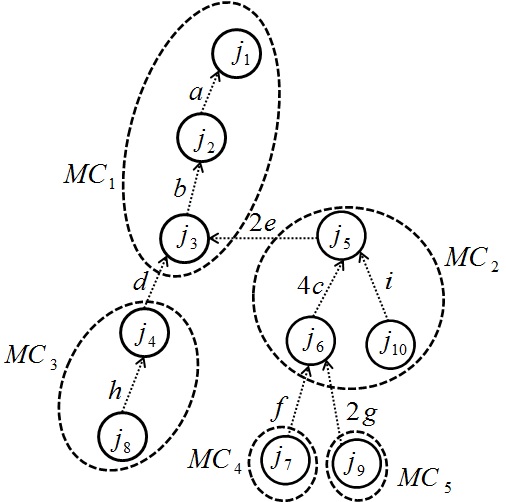} 	
		\\
		\textbf{(a)}& \ \ \textbf{(b)}& \ \ \textbf{(c)}
	\end{tabular}
	\caption{(a) A Tree $T$ of unbounded in-degree. $a<b<d<h$ , $a<c<g$ , $b<e$.
		(b) A Binary Tree $T'$ where each node has in-degree at most $2$.
		(c) Formation of meta-clusters for $\ell =3$. 
	}
	\label{bin-tree-a}
	\label{bin-tree-b}
\end{figure}

\textbf{Bounding the in-degree of a node in the dependency graph:} We convert the dependency graph $G$ into another graph $G'$ where-in the in-degree of each node is bounded by $2$ with in-degree of the root being $1$. This is done as follows (Figure~\ref{bin-tree-a}(a)-(b)): let $\mathcal{T}$ be a tree in $G$. $\mathcal{T}$ is converted into a binary tree using the standard procedure after sorting the children of node $j'$ from left to right in non-decreasing order of distance from $j'$ i.e., for each child $ k'$ (except for the nearest child) of $j'$, add an edge to its left sibling with weight $2\dist{ k'}{\eta(k')}$ and remove the edge $( k', j')$. There is no change in the outgoing edge of the leftmost child of $j'$. 
Let $\sigmaone(j')$ be the parent of node $j'$ in $G'$. Its easy to see that $\dist{j'}{\sigmaone(j')} \leq 2 \dist{j'}{\eta(j')}$. 
Henceforth whenever we refer to distances, we mean the new edge weights. Hence, we have the following: 
\begin{eqnarray}
\sum_{j' \in {\csparse}} \bard{j'} \big( 
\sum_{i \in {\bundle{j'}}}
\dist{i}{j'} x^*_{ij'} +
\dist{j'}{\sigmaone(j')} ( 1 - \sum_{i \in \bundle{j'}}x^*_{ij'} )
\big)\leq 12LP_{opt}	
\label{factor12eqn}
\end{eqnarray}

\subsection{Constructing the Meta-clusters}\label{mc}

If we could ensure that for every $j' \in \csparse$ for which no facility is opened in $\bundle{j'}$, a facility is opened in $\psi(j')$, we are done (with $3$ factor loss in capacities). But we do not know how to do that. However, for every such cluster center $j'$, we will identify a set of centers which will be able to take care of the demand of $j'$ and each one of them is within a distance of $O(\ell)\dist{j'}{\psi(j')}$ from $j'$.

We exploit the following observation to make groups of $\ell$ clusters:
each 
cluster has facilities opened in it to an extent of at least ($1 - 1/\ell$). Hence, every  collection of $\ell$ clusters, has at least $\ell - 1$ facilities opened in it. 
Thus, we make groups (called meta-clusters), each consisting of $\ell$ clusters, if possible.
For every tree $\mathcal{T}$ in $G'$, MCs are formed by processing the nodes of $\mathcal{T}$ in a top-down greedy manner starting from the root as described in Algorithm 2. (Also see Figure \ref{bin-tree-a}(c)). 
There may be some MCs of size less than $\ell$, towards the leaves of the tree.

\begin{algorithm}
	\begin{algorithmic}[1]
		\STATE \textbf{Meta-cluster(Tree $\mathcal{T}$)}\\
		\STATE $\mathcal{N} \gets set~of~nodes~in~\mathcal{T}$.
		\WHILE {there are non-grouped nodes in $\mathcal{N}$}
		\STATE Pick a topmost non-grouped node, say $k$ of $\mathcal{N}$: form a new MC, $G_k$. 
		\WHILE {$G_k$ has fewer than $\ell$ nodes}
		\STATE If $\mathcal{N} = \emptyset$ then break and stop.
		\STATE Let $j = argmin_{u \in \mathcal{N}} \{\dist{u}{v} :(u,v) \in \mathcal{T}, v \in G_k$\}, set $G_k = G_k \cup \{j\}$. $\mathcal{N} \gets \mathcal{N}\setminus \{j\}$.
		\ENDWHILE 
		\ENDWHILE 
	\end{algorithmic}
	\label{alg2}
	\caption{Meta-cluster Formation}
\end{algorithm}


 Let $G_r$ denote a MC with $r$ being the root cluster of it. With a slight abuse of notation, we will use $G_r$ to denote the collection of centers of the clusters in it as well as the set of clusters themselves. Let $\mathcal{H}(G_r)$ denote the subgraph of $\mathcal{T}$ induced by the nodes in $G_r$. $\mathcal{H}(G_r)$ is clearly a tree. 
  We say that $G_r$ is responsible for serving the demand in its clusters.

With the guarantee of only $\ell - 1$ opening amongst $\ell$ clusters, there may be a cluster with no facility opened in it. If this cluster happens to be a sparse cluster at the root, its demand cannot be served within the MC. Thus we define a (routing) tree structure on MCs as follows: a tree consists of MCs as nodes and there is an edge from a MC $G_r$ to another MC $G_s$ if there is a directed edge from root $r$ of $G_r$ to some node $s' \in \cen{s}$, $G_s$ is then called the parent meta-cluster of  $G_r$, $G_r$ a child meta-cluster of $G_s$ and the edge ($r,~s'$) is called the \textit{connecting edge} of the child MC $G_r$. If $G_r$ is a root MC, add an edge to itself with cost $\dist{r}{~\sigmaone(r)}$. This edge is then called the \textit{connecting edge} of $G_r$.  Note that the cost of any edge in $G_s$ is less than the cost of the connecting edge of $G_r$ which is further less than the cost of any edge in $G_r$. Further, a dense cluster, if present, is always the root cluster of a root MC. 
We guarantee that  the unmet demand of a MC is served in its parent MC.

\subsection{3-factor Capacity Violation}\label{3factor}

In this section, we present the main contribution of our work. Inspired by the LP of Krishnaswamy \etal~\cite{Krishnaswamysoda2011}, we formulate a new LP and present an iterative rounding algorithm to obtain a solution with at most two fractionally opened facilities. Such a solution is called {\em pseudo-integral} solution. Modifying the LP of Krishnaswamy \etal~\cite{Krishnaswamysoda2011} and obtaining a feasible solution of bounded cost for the capacitated scenario is non-trivial. The rounding algorithm is also non-trivial.

\subsubsection{Formulating the new LP and obtaining a pseudo-integral solution}
\label{newLP}

Sparse clusters have the nice property that they need to take care of small demand (< $u$ each) and dense clusters have the nice property that the total opening within each cluster is at least $1$. These properties are exploited to
	define a new LP 
	that opens sufficient number of facilities 
	in each MC such that the opened facilities are well spread out amongst the clusters (we make sure that at most 1 (sparse) cluster has no facility opened in it) and demand of a dense cluster is satisfied within the cluster itself. 
We then present an iterative rounding algorithm that provides us with a solution having at most two fractionally opened facilities.

Let $\delta_r$ be the number of dense clusters and $\sigma_r$ be the number of sparse clusters in a MC $G_r$. 
With at least $1 - 1/\ell$ opening in each sparse cluster, observing the fact that $\sigma_r \leq \ell$, we have at least $\sigma_r (1 - 1/\ell) \geq \sigma_r - 1$ total opening in $\sigma_r$ sparse clusters of $G_r$. Also, at least $\floor{\demandofj{j_d}/{\capacity}}$ opening is there in a dense cluster centered at $j_d$ in $G_r$. Let $\alpha_r = \max\{0,\sigma_r-1\}$.  LP is defined so as to open at least $\floor{\demandofj{j_d}/{\capacity}} +\alpha_r$	facilities in $G_r$.
Let $\T{j'} = \{i \in \bundle{j'}:\dist{i}{j'} \leq \dist{j'}{\sigmaone(j')}\}$ if $j' \in \csparse$ (recall that $\sigmaone(j')$ is the parent of $j'$ in binary tree) and $\T{j'} = \neighbor{j'}$ if $j' \in \cdense$. 
Also, let $\sj{r} = G_r \cap \csparse$ and $s_r = \alpha_r$ for all MCs $\cen{r}$, $\tilde{\facilityset}= \facilityset$, $\tilde{\budget} = \budget$, $r_{j'} = \floor {\demandofj{j'}/\capacity} \ \forall j' \in \cdense$ and $ \tauhat{j'} = \tau(j')\ \forall j' \in \cliset'$. These sets are updated as we go from one iteration to the next iteration in our rounding algorithm, thereby giving a new (reduced) LP in each iteration. Let $w_i$ denote whether facility $i$ is opened in the solution or not. We now write an LP, called $LP_2$ with the objective of minimising the following function: $\obj{w} =  \sumlimits{j' \in \csparse}{} ~ \demandofj{j'} [\sumap{i \in \bundle{j'}}\dist{i}{j'} w_{i} + \dist{j'}{\sigmaone(j') }  ( 1 - \sumlimits{i \in \bundle{j'}}{} w_{i})] + \capacity \sumap{j' \in \crich} \sumap{\singlefacility \in \bundle{j'}} \dist{i}{j'} w_i$ 
\begin{eqnarray}  
s.t. &\sumap{i \in \tauhat{j'}} w_{i} \leq 1 &\forall ~j' \in \csparse \label{LP-meta-clusters_const1-sparse-3}\\
&\sumap{i \in  \tauhat{j'}} w_{i} = r_{j'}&\forall ~j' \in \cdense  \label{LP-meta-clusters_const1-dense-3}\\
&\sumap{j' \in \sj{r}}~\sumap{i \in \tauhat{j'}} w_{i} \geq s_r &\forall~r:G_{r} \text{ is a MC}  \label{LP-meta-clusters_const2-3}\\
&\sumap{i \in \tilde{\facilityset}}{} f_i w_{i} \leq \tilde{\budget}  \label{LP-meta-clusters_const3-3}\\
&0 \leq w_i \leq 1 ~\forall ~i \in \tilde{\facilityset} \label{LP-meta-clusters_const5-3}
\end{eqnarray}

Constraints~(\ref{LP-meta-clusters_const1-dense-3}) and (\ref{LP-meta-clusters_const2-3}) ensure that sufficient number of facilities are opened in a meta-cluster.  Constraints~(\ref{LP-meta-clusters_const1-sparse-3})  and~(\ref{LP-meta-clusters_const1-dense-3}) ensure that the opened facilities are well spread out amongst the clusters as no more than $1$ and $\floor {\frac{\demandofj{j'}}{\capacity}}$ facilities are opened in a sparse and dense cluster respectively. Constraint~(\ref{LP-meta-clusters_const1-dense-3}) also ensures that at least $\floor {\frac{\demandofj{j'}}{\capacity}}$  facilities are opened in a dense cluster. This requirement is essential to make sure that the demand of a dense cluster is served within the cluster only. Hence, equality in constraint~(\ref{LP-meta-clusters_const1-dense-3}) is important.

\begin{lemma}
	\label{feasiblesolution-costKNM}
	A feasible solution $w'$ to $LP_2$ can be obtained such that 
	$ \obj{w'}\leq (2\ell+13) LP_{opt}$.
\end{lemma}
\begin{proof}
		Define a feasible solution to the $LP_2$ as follows: 
	let $j' \in \cdense,~i \in \T{j'}$, set $w'_i =  \frac{l_i}{d_{j'}} \floor{d_{j'}/u} = 
	\frac{l_i}{\capacity}\frac{\floor{d_{j'}/u}}{d_{j'}/u} \le \frac{l_i}{u} \leq y^*_i$. 
	For $j' \in \csparse$, we set $w'_i = min\{x^*_{ij'},~y^*_i\} = x^*_{ij'} \leq y^*_i$ for $i \in \T{j'}$ and $w'_i = 0$ for $i \in \neighbor{j'} \setminus \T{j'}$. We will next show that the solution is feasible.
	
	For $j' \in \csparse$,
	$\sumlimits{i\in \T{j'}}{}w'_i \leq \sumlimits{i\in \bundle{j'}}{}w'_{i} =  \sumlimits{i\in \bundle{j'}}{}x^*_{ij'} \leq 1$.
	
	Next, let $j' \in \cdense$, then $\sumlimits{i \in \T{j'}}{} w'_{i} = \sumlimits{i \in \neighbor{j'}}{} \frac{l_i}{\capacity}\frac{\floor{d_{j'}/u}}{d_{j'}/u} 
	= {\floor{d_{j'}/u}}$ as $\sumlimits{i \in \bundle{j'}}{} l_i = d_{j'}$.
	Note that $\sumlimits{i \in \T{j'}}{} w'_{i} \geq 1$ as $\demandofj{j'} \geq \capacity$.	
	
	For a meta-cluster $G_r$, we have 
	$\sumlimits{j' \in \cen{r}}{}~\sumlimits{i \in \T{j'}}{} w'_{i} 
	= 
	\sumlimits{j' \in \cen{r}\cap \csparse}{}~\sumlimits{i \in \T{j'}}{} x^*_{ij'} 
	\ge 
	\sumlimits{j' \in \cen{r} \cap \csparse}{} (1 - 1/l)
	= \max\{0,\sigma_r-1\} = \alpha_r$.
	
	Since for each $i \in 
	\facilityset$ we have $w'_i \leq y^*_i$  
	$\Rightarrow \sumlimits{i \in \facilityset}{} f_i w'_{i}  \leq  \sumlimits{i \in \facilityset}{} f_i y^*_{i} \leq \budget$.

	
	Next, consider the objective function. For $j' \in \crich$, we have $\sumlimits{i \in \T{j'}}{} \capacity~\dist{i}{j'} w'_i = u\sumap{i \in \bundle{j'}} \dist{i}{j'}(\frac{\sumap{j \in \clientset} x^*_{ij}}{\capacity}) = \sumap{i \in \bundle{j'}} \sumap{j \in \clientset} \dist{i}{j'} x^*_{ij} \leq \sumap{i \in \bundle{j'}} \sumap{j \in \clientset} \left( \dist{i}{j} + 2\ell \C{j} \right) x^*_{ij}$.
	Summing over all $j' \in \cdense$ we get, 
	$\sumlimits{j' \in \crich}{} \sumlimits{i \in \bundle{j'}}{} \sumlimits{j \in \clientset}{} x^*_{ij} \lbrack \dist{i}{j} +2\ell\C{j} \rbrack $
	%
	%
	%
	%
	%
	$\leq (2\ell+1) LP_{opt}$.
	
	Now consider the part of objective function for $\csparse$. $\sum_{j' \in {\csparse}}{} \demandofj{j'} ( \sum_{i \in {\neighbor{j'}}}{}\dist{i}{j'} w'_i +	\dist{j'}{\sigmaone(j')}  (1 - \sum_{i \in \bundle{j'}}{} w'_i) )$
	$= \sum_{j' \in {\csparse}}{} \demandofj{j'} ( \sum_{i \in {\T{j'}}}{}\dist{i}{j'} w'_i + \sum_{i \in {\neighbor{j'} \setminus \T{j'}}}{}\dist{i}{j'} w'_i +	\dist{j'}{\sigmaone(j')} (1 - \sum_{i \in \T{j'}}{} w'_i - \sum_{i \in \bundle{j'} \setminus \T{j'}}{} w'_i)) = \sum_{j' \in {\csparse}}{} \demandofj{j'} ( \sum_{i \in {\T{j'}}}{}\dist{i}{j'} x^*_{ij'} +	\dist{j'}{\sigmaone(j')} (1 - \sum_{i \in \T{j'}}{} x^*_{ij'}))$
	
	$< \sum_{j' \in {\csparse}}{} \demandofj{j'} ( \sum_{i \in {\T{j'}}}{}\dist{i}{j'} x^*_{ij'} +	\dist{j'}{\sigmaone(j')} (1 - \sum_{i \in \T{j'}}{}x^*_{ij'})) + 
	\sum_{j' \in {\csparse}}{} \demandofj{j'} ( \sum_{i \in {\neighbor{j'} \setminus \T{j'}}}{}(\dist{i}{j'} -	\dist{j'}{\sigmaone(j')})x^*_{ij'})$ as $\dist{i}{j'} >\dist{j'}{\sigmaone(j')}~\forall i \in \bundle{j'}\setminus\T{j'}$ 
	
	$=\sum_{j' \in {\csparse}}{} \demandofj{j'} ( \sum_{i \in {\neighbor{j'}}}{}\dist{i}{j'} x^*_{ij'} + 	\dist{j'}{\sigmaone(j')}  (1 - \sum_{i \in \bundle{j'}}{} x^*_{ij'} ) )$. Thus, by equation~(\ref{factor12eqn}), we get $\sum_{j' \in {\csparse}}{} \demandofj{j'} ( \sum_{i \in {\neighbor{j'}}}{}\dist{i}{j'} w'_i +	\dist{j'}{\sigmaone(j')}  (1 - \sum_{i \in \bundle{j'}}{} w'_i) ) \leq 12 LP_{opt}$.
	
	Thus, the solution $w'$ is feasible and $\obj{w'}$,
	\\ 
	$\sumlimits{j' \in \csparse}{} ~ \demandofj{j'} ~\left[ \sumap{i \in \bundle{j'}}\dist{i}{j'} w'_{i} + \dist{j'}{\sigmaone(j')}  \left( 1 - \sumlimits{i \in \bundle{j'}}{} w'_{i}\right) \right] + \capacity \sumap{j' \in \crich} \sumap{\singlefacility \in \bundle{j'}} \dist{i}{j'} w'_i 
	\leq (2\ell+13) LP_{opt}$.

\end{proof}

For a vector $w \in \mathcal{R}^{|\facilityset|}$ and ${\facilityset'} \subseteq \facilityset$, let $w^{\facilityset'}$ denote the vector \lq $w$ restricted to ${\facilityset'}$\rq. Also, let $\sr{} = <\sr{r}>$, $\Sr{} = <\Sr{r}>$ and $R = <r_{j'}>_{j'\in \cdense}$. Algorithm $3$ presents an iterative rounding algorithm that solves $LP_2$ and returns a pseudo-integral solution $\tilde{w}$. A sparse cluster is removed from the scenario for the next iteration as and when a facility is integrally opened in it (lines $11, 12$). In a dense cluster centered at $j'$, the number of facilities to be opened by the LP ($r_{j'}$) is decremented by the number of integrally opened facilities in it (line $15$) at every iteration and the cluster is removed when it becomes $0$ (line $16$). Similar treatment is done for $G_r \cap \csparse$ (line $12, 14$)

	\begin{algorithm}
		\begin{algorithmic}[1]
			\STATE \textbf{pseudo-integral($\tilde{\facilityset}$, $\tilde{\budget}$, $\sr{}$, $\Sr{}$, $\tauhat{}$, $R$)}\\
			\STATE $\tilde{w}^{\facilityset}_i = 0 \ \forall i \in {\facilityset}$ 
			\WHILE { $\tilde{\facilityset} \ne \emptyset$}
			\STATE Compute an extreme point solution $\tilde{w}^{\tilde{\facilityset}}$ to $LP_2$.
			\STATE $\tilde{\facilityset}_0 \gets
			\{i \in \tilde{\facilityset}:\tilde{w}^{\tilde{\facilityset}}_i = 0\}$, $\tilde{\facilityset}_1 \gets \{i \in \tilde{\facilityset}:\tilde{w}^{\tilde{\facilityset}}_i = 1\}$. 
			
			\IF {$\lvert \tilde{\facilityset}_0 \lvert = 0$ and $\lvert \tilde{\facilityset}_1 \lvert = 0$}
			\STATE Return $\tilde{w}^{\facilityset}$. 
			$\backslash*~$exit when all variables are fractionally opened$*\backslash$
			\ELSE
			\STATE For all MCs $\cen{r}$\{
			\WHILE  {$\exists j'\in \Sr{r}$ such that constraint (\ref{LP-meta-clusters_const1-sparse-3}) is tight over $\tilde{\facilityset}_1$ i.e., $\sumap{i \in \tauhat{j'} \cap \tilde{\facilityset}_1} \tilde{w}^{\tilde{\facilityset}}_{i} = 1$}
			
			\STATE Remove the constraint corresponding to $j'$ from (\ref{LP-meta-clusters_const1-sparse-3}). 
			$\backslash*~$a facility in $\tau(j')$ has been opened$*\backslash$ \nolinebreak
			\STATE set $\Sr{r} = \Sr{r} \setminus \{j'\},~\sr{r} = \max\{0,\sr{r} - 1\}$. 
			$\backslash*~$delete the contribution of $j'$ in constraint (\ref{LP-meta-clusters_const2-3})$*\backslash$
		
			\ENDWHILE	
			\STATE If $s_r=0$, remove the constraint corresponding to $\sj{r}$ from (\ref{LP-meta-clusters_const2-3}).  $\backslash*~$$\sigma_r -1$ facilities have been opened in $G_r \cap \csparse$ $*\backslash$
		
			\STATE If $ \exists j'\in G_r \cap \cdense$, set $r_{j'} \gets r_{j'} - | \tauhat{j'} \cap \tilde{\facilityset}_1|$. 
			$\backslash*$ decrement $r_{j'}$ by the number of integrally opened facilities in $\tauhat{j'}$ $*\backslash$
			
			\STATE If $r_{j'} = 0$, remove the constraint corresponding to $j'$ from (\ref{LP-meta-clusters_const1-dense-3}). $\backslash*~$$\floordjbyu{j'}$ facilities have been integrally opened in $\tau(j')$ $*\backslash$
				\}
			\ENDIF
			\STATE $\tilde{\facilityset} \gets \tilde{\facilityset} \setminus (\tilde{\facilityset}_0 \cup \tilde{\facilityset}_1)$,  
			$\tilde{\budget} \gets \tilde{\budget} - \sum_{i \in \tilde{\facilityset}_1}{}f_i\tilde{w}^{\tilde{\facilityset}}_i$,
			$\tauhat{j'} \gets \tauhat{j'} \setminus (\tilde{\facilityset}_1 \cup \tilde{\facilityset}_0) ~\forall j' \in \clientset'$.
			
			\ENDWHILE
			\STATE Return $\tilde{w}^{\facilityset}$
		\end{algorithmic}
		\label{alg3}
		\caption{Obtaining a pseudo-integral solution}
	\end{algorithm}

	\begin{lemma} 
		\label{algo-solution-costKNM-0-1}
		The solution $\tilde{w}$ given by Iterative Rounding Algorithm satisfies the following: i) $\tilde{w}$ is feasible, ii) $\tilde{w}$ has at most two fractional facilities and 
		iii) $\obj{\tilde{w}} \leq (2\ell+13) LP_{opt}$.
	\end{lemma}
	\begin{proof}
			$i)$ We will prove the claim by induction. Let $LP^{(t)}$ denote the $LP$ at the beginning of the $t^{th}$ iteration and $\tilde{w}^{(t)}$ denote the solution at the end of the $t^{th}$ iteration.
		We will show that if $\tilde{w}^{(t)}$ is a feasible solution to  $LP_2$, then $\tilde{w}^{(t+1)}$ is also a feasible solution to $LP_2$. Since $\tilde{w}^{(1)}$ is feasible (extreme point solution), the feasibility of the solution follows.
		Let $\tilde{\facilityset}^{(t)}, \tilde{\budget}^{(t)},  \sr{}^{(t)}$, $\Sr{}^{(t)}$, $\tauhat{}^{(t)}, R^{(t)}$ denote the values at the beginning of the $t^{th}$ iteration.
		Then,  
		$\tilde{w}^{(t+1)}_{i} = \tilde{w}^{(t)}_{i} ~\forall i \in \facilityset \setminus \tilde{\facilityset}^{(t+1)}$. 

		
		Consider a constraint that was not present in $LP^{(t+1)}$. In any iteration, we remove a constraint only when none of the facilities in its corresponding clusters is fractionally opened. That is all the facilities in $\tau({j'})$ appearing on the left hand side of a constraint are integral. Thus $\tilde{w}^{(t+1)}_{i} = \tilde{w}^{(t)}_{i}$ for all such facilities. Hence if they are satisfied by $\tilde{w}^{(t)}$ then they are satisfied by $\tilde{w}^{(t + 1)}$. So, we consider only those constraints that were present in $LP^{(t+1)}$. For $j' \in \csparse$, since $\tauhat{j'}^{(t+1)} = \tau(j') \setminus \tilde{\facilityset}_{0}^{(t)}~\forall t$, therefore, $\sumap{i \in \tauhat{j'}^{(t+1)}} \tilde{w}_i^{(t+1)} = \sumap{i \in \tau(j')} \tilde{w}_i^{(t+1)}~\forall t$. Thus, we will omit $(t)$ and use $\tau()$ instead of $\tauhat{}$ for brevity of notation. 
		
		Consider constraints~(\ref{LP-meta-clusters_const1-sparse-3}) that were not removed in $t^{th}$ iteration. Since $\tau(j') \subseteq \tilde{\facilityset}^{(t+1)}$ for $j' \in \csparse$, the feasibility of the constraint follows as $\tilde{w}^{(t+1)}$ is an extreme point solution of the reduced $LP$ over the set $\tilde{\facilityset}^{(t+1)}$.	
		
		Next, consider constraints~(\ref{LP-meta-clusters_const1-dense-3}). Let $\facilityset_1^{(t)}$ denote the set of facilities that are opened integrally in $\tilde{w}^{(t)}$ i.e., $\tilde{w}^{(t)}_{i} = 1 ~\forall i \in \facilityset_1^{(t)}$ then
		the corresponding constraint in $LP^{(t+1)}$ is 
		$\sumap{i \in \tau({j'}) \setminus \facilityset_1^{(t)}} w_i = \floor{\frac{d_{j'}}{u}} - |\facilityset_1^{(t)}|$. Since $\tilde{w}^{(t +1)}$ is an extreme point solution of $LP^{(t+1)}$, it satisfies this constraint i.e., $\sumap{i \in \tauhat{j'} \setminus \facilityset_1^{(t)}} \tilde{w}^{(t+1)}_{i} = \floor{\frac{d_{j'}}{u}} - |\facilityset_1^{(t)}|$. Since $w^{(t+1)}_{i} = w^{(t)}_{i} = 1  ~\forall i \in \facilityset_1^{(t)}$, adding $\facilityset_1^{(t)}$ on both the sides, we get the desired feasibility.

		Consider constraints (\ref{LP-meta-clusters_const2-3}). Since $\tilde{w}^{(t)}$ is feasible for $LP_2$, we have, $\sumap{j' \in G_r \cap \csparse} \sumap{i \in \tau(j')} \tilde{w}^{(t)}_{i} \geq \alpha_r$ and since $\tilde{w}^{(t+1)}$ is feasible for $LP^{(t+1)}$, we have $\sumap{j' \in \Sr{r}^{(t+1)}} \sumap{i \in \tau{(j')}} \tilde{w}^{(t+1)}_{i} \geq \sr{r}^{(t+1)}$. 
		Then, $\sumap{j' \in G_r \cap \csparse} \sumap{i \in \tau(j')} \tilde{w}^{(t+1)}_{i} = \sumap{j' \in (G_r \cap \csparse) \setminus \Sr{r}^{(t+1)}} \sumap{i \in \tau(j')} \tilde{w}^{(t+1)}_{i} + \sumap{j' \in \Sr{r}^{(t+1)}} \sumap{i \in \tau(j')} \tilde{w}^{(t+1)}_{i}$
		$ \ge \sumap{j' \in (G_r \cap \csparse) \setminus \Sr{r}^{(t+1)}} \sumap{i \in \tau(j')} \tilde{w}^{(t)}_{i} + \sr{r}^{(t+1)}$
		$ = \sumap{j' \in (G_r \cap \csparse) \setminus \Sr{r}^{(t+1)}} 1 + \sr{r}^{(t+1)}$ (as these clusters must have been removed as they got tight)
		$= |(G_r \cap \csparse) \setminus \Sr{r}^{(t+1)}| + \sr{r}^{(t+1)}$ = $\alpha_r$

		Next, consider constraint (\ref{LP-meta-clusters_const3-3}).	Since $\tilde{w}^{(t)}$ is feasible for $LP_2$, we have  $\sumap{i \in\facilityset} f_i\tilde{w}^{(t)}_{i} \leq \budget$  and since $\tilde{w}^{(t+1)}$ is feasible for $LP^{(t+1)}$, we have $\sumap{i \in \tilde{\facilityset}^{(t+1)}} f_i\tilde{w}^{(t+1)}_{i} \leq \tilde{\budget}^{(t+1)}$. Also, we have $w^{(t+1)}_{i} = w^{(t)}_{i}  ~\forall i \in {\facilityset} \setminus \tilde{\facilityset}^{(t+1)}$. Consider $\sumap{i \in \facilityset} f_i\tilde{w}^{(t+1)}_{i} = \sumap{i \in \facilityset \setminus \tilde{\facilityset}^{(t+1)}}  f_i\tilde{w}^{(t+1)}_{i} + \sumap{i \in \tilde{\facilityset}^{(t+1)}} f_i\tilde{w}^{(t+1)}_{i} \leq \sumap{i \in \facilityset \setminus \tilde{\facilityset}^{(t+1)}}  f_i\tilde{w}^{(t)}_{i} + \tilde{\budget}^{(t+1)} $. And since $\tilde{\budget}^{(t+1)} = \budget - \sumap{i \in \facilityset \setminus \tilde{\facilityset}^{(t+1)}}  f_i\tilde{w}^{(t)}_{i}$, we have $\sumap{i \in \facilityset} f_i\tilde{w}^{(t+1)}_{i}  \leq \budget$. Thus, the solution $\tilde{w}^{(t+1)}$ is feasible.
	
%
	$ii)$	Consider the last iteration of the algorithm. The iteration ends either at step $(3-4)$ or at step $(9-10)$. In the former case, the solution clearly has no fractionally opened facility. Suppose we are in the latter case. Let the linearly independent tight constraints corresponding to (\ref{LP-meta-clusters_const1-sparse-3}),
		(\ref{LP-meta-clusters_const1-dense-3}) and
		(\ref{LP-meta-clusters_const2-3}) be denoted as $\mathcal{X}$, $\mathcal{Y}$ and $\mathcal{Z}$ respectively. 
		Let $A$ and $B$ be set of variables corresponding to some constraint in $\mathcal{X}$ and $\mathcal{Z}$ respectively such that $A\ \cap B \neq \emptyset$. 
		Then, $A \subseteq B$. Imagine deleting $A$ from $B$ and subtracting $1$ 
		from $\sr{r}$. Repeat the process with another such constraint in $\mathcal{X}$ until there is no more constraint in $\mathcal{X}$ whose variable set has a non-empty intersection with $B$. At this point, $\sr{r} \geq 1$ 
		and the number of variables in $B$ is at least $2$. Number of variables in any set corresponding to a tight constraint in $\mathcal{X}$ (or $\mathcal{Y}$) is also at least $2$. Thus, the total number of variables is at least $2 |\mathcal{X} | + 2 |\mathcal{Y}| + 2 |\mathcal{Z}|$ and the number of tight constraints is at most  $ |\mathcal{X} | + |\mathcal{Y}| + |\mathcal{Z}| + 1$. 
		Thus, we get $|\mathcal{X}| + |\mathcal{Y}| + |\mathcal{Z}| \leq 1$ and hence there at most two (fractional) variables.
		
	$iii)$ Note that no facility is opened in $\bundle{j'}\setminus \T{j'}:j' \in \csparse$ for if $i \in \bundle{j'}\setminus\T{j'}:j'\in \csparse$ is opened, then it can be shut down and the demand $d_{j'} \tilde{w}_i$,  can be shipped to $\sigmaone(j')$, decreasing the cost as $\dist{j'}{\sigmaone(j')} < \dist{i}{j'}$.
		Then, the claim follows as we compute extreme point solution in step (7) in the first iteration and the cost never increases in subsequent calls.

	\end{proof}

\subsubsection{Obtaining an integrally open solution} \label{sectionfinal}

The two fractionally opened facilities obtained in Section~\ref{newLP}, if any, are opened integrally at a loss of additive $f_{max}$ in the budget. Let $\hat{w}$ denote the 
 solution obtained.
 Next lemma shows that $\hat{w}$ has sufficient number of  facilities opened in each MC to serve the demand the MC is responsible for, except possibly for $u$ units. Lemma (\ref{io-in-grp-knm_new1.1}) presents the assignments done within a MC and discusses their impact on the capacity and the cost bounds.

\begin{lemma}
	\label{io-in-grp-knm_new1.1}
	Consider a meta-cluster $G_r$. Suppose the capacities are scaled up by a factor of $\max\{3, 2+\frac{4}{\ell-1}\}$ for $\ell \geq 2$. Then, i) the dense cluster in $G_r$ (if any) is {\em self-sufficient} i.e., its demand can be completely assigned within the cluster itself at a loss of at most factor 2 in cost. ii) There is at most one cluster with no facility opened in it and it is a sparse cluster. iii) Any (cluster) center responsible for the unserved demand of $j' \in \cliset'$ is an ancestor of $j'$ in $\mathcal{H}(G_r)$. iv) At most $u$ units of demand in $G_r$ remain un-assigned and it must be in the root cluster of $G_r$. Such a MC cannot be a root MC. v) Let $\beta_r =  \floordjbyu{\dense} + \max \{0, \sigma_r - 1\} $, where $\dense$ is the center of the dense root cluster (if any) in $\cen{r}$. Then, at least $\beta_r$ facilities are opened in $\cen{r}$. (vi) Total distance traveled by demand $\demandofj{j'}$ of $j' (\ne r) \in G_r $ to reach the centers of the clusters in which they are served is bounded by $d_{j'}\dist{j'}{\sigmaone(j')}$. 
\end{lemma}

\begin{proof}
	($i$) Let $j_d \in \cdense \cap \cen{r}$. Total demand $d_{j_d}$ of $j_d$ can be distributed to the opened facilities ($\geq \floordjbyu{\dense}$) at a loss of factor $2$ in capacity and cost both, as $d_{j_d}/u - \floordjbyu{\dense} < 1 \leq \floordjbyu{\dense}$.
	
	
	For $\sigma_r=0$, ($ii$) - ($v$) hold vacuously. So, let $\sigma_r \geq 1$ ($ii$) $LP_2$ opens $\alpha_r = \max \{0, \sigma_r - 1\} $ facilities in $\cen{r} \cap \csparse$. Constraint (\ref{LP-meta-clusters_const1-sparse-3}) ensures that at most one facility is opened in each sparse cluster. Thus, there is at most one cluster in $\cen{r} \cap \csparse$ with no facility opened in it.  
	($iii$) \& ($iv$) Let $j' \in \cen{r}\cap\csparse$ such that no facility is opened  in $\T{j'}$. If $j'$ is not the root of $G_r$ or $G_r$ is a root MC, then 
	$LP_2$ must have opened a facility in $\T{\sigmaone(j')}$. Demand of $j'$ is assigned to this facility at a loss of maximum $2$ factor in capacity if $\sigmaone(j') \in \csparse$ and 3 if $\sigmaone(j') \in \cdense$: $d_{\psi(j')} = 1.99u$ and $d_{j'} = .99u$.
	Otherwise (if $j'$ is the root of $G_r$ and $G_r$ is not a root MC), at most $u$ units of demand of $\cen{r}$ remain unassigned within $\cen{r}$. ($v$) holds as$ \floordjbyu{\dense}$ facilities are opened in the cluster centered at $j_d$ and $\alpha_r=\max \{0, \sigma_r - 1\} $ facilities are opened in $G_r \cap \csparse$ by constraints (\ref{LP-meta-clusters_const1-dense-3}) and (\ref{LP-meta-clusters_const2-3}) respectively.
	($vi$) Since the demand  $\demandofj{j'}$ of $j' \in G_r $ is served either within its own cluster or in the cluster centered at $\sigmaone(j')$, total distance traveled by demand $\demandofj{j'}$ of $j'$ to reach the centers of the clusters in which they are served is bounded by $d_{j'}\dist{j'}{\sigmaone(j')}$. 	
\end{proof}

Lemma (\ref{io-in-grp-knm_new2.1}) deals with the remaining demand that we fail to assign  within a MC. Such demand is assigned in the parent MC. Lemma (\ref{io-in-grp-knm_new2.1}) discusses the cost bound for such assignments and the impact of the demand coming onto $\cen{r}$ from the children MCs along with the demand within $\cen{r}$ on capacity.

\begin{lemma}
	\label{io-in-grp-knm_new2.1}
	Consider a meta-cluster $G_r$. The demand of $G_r$ and the demand coming onto $G_r$ from the children meta-clusters can be assigned to the facilities opened in $G_r$ such that: i) capacities are violated at most by a factor of  $max\{3, 2+\frac{4}{\ell-1}\}$ for $\ell \geq 2$. ii) Total distance traveled by demand $\demandofj{j'}$ of $j' \in \cliset'$ to reach the centers of the clusters in which they are served is bounded by $\ell d_{j'}\dist{j'}{\sigmaone(j')}$.
\end{lemma}

\begin{proof}
	After assigning the demands of the clusters within $G_r$ as explained in Lemma (\ref{io-in-grp-knm_new1.1}), demand coming from all the children meta-clusters are distributed proportionately to facilities within $G_r$ utilizing the remaining capacities. Next, we will show that this can be done within the claimed capacity bound.
	

	($i$) Let $G_r$ be a non leaf meta-cluster with a dense cluster $j' \in \cdense$ at the root, if any. Also, let $t_r$ be the total number of clusters in $G_r$, i.e., $t_r= \delta_r+\sigma_r$.
	The total demand to be served in $G_r$ is at most $u(\floor{d_{j'}/u} + 1 + \sigma_r) + u (t_r+1) \leq
	(\beta_r + 2)u + (t_r+1) u$ whereas the total available capacity is at least $\beta_r u$ by Lemma (\ref{io-in-grp-knm_new1.1}). Thus, the capacity violation is bounded by
	$\frac{(\beta_r + 2)u + (t_r+1) u}{\beta_r u} \leq \frac{(\beta_r + 2)u + (\beta_r+2) u}{\beta_r u} = 2+4/\beta_r \leq 2+4/(\ell-1)$ (as $\floor{d_{j'}/u} \geq \delta_r$ we have  $\beta_r \geq \sigma_r - 1 + \delta_r = t_r - 1 =  \ell-1$ for a non-leaf MC).
	
	The capacity violation of factor $3$ can happen in the case when no facility is opened in $\T{j'}$ for $j' \in \csparse$ and $\sigmaone(j') \in \cdense$ as explained in Lemma (\ref{io-in-grp-knm_new1.1}).
	
	Leaf meta-clusters may have length less than $l$ but they do not have any demand coming onto them from the children meta-cluster, thus capacity violation is bounded as explained in Lemma (\ref{io-in-grp-knm_new1.1}). 

	($ii$) Let $j'$ belongs to a MC $G_r$ such that its demand is not served within $G_r$. Then, $j'$ must be the root of $G_r$ and its demand is served by facilities in clusters of the parent MC, say $G_s$. Since the edges in $G_s$ are no costlier than the connecting edge $(j', \sigmaone(j'))$ of $G_r$ and there are at most $\ell-1$ edges in $G_s$, the total distance traveled by demand $\demandofj{j'}$ of  $j'$ to reach the centers of the clusters in which they are served is bounded by $\ell d_{j'}\dist{j'}{\sigmaone(j')}$.\end{proof}

Choosing $\ell \geq 2$ such that $2 + \frac{4}{(\ell-1)} = 3 \Rightarrow \ell = 5$. Lemma~(\ref{bd-cost-to-facility}) bounds the cost of assigning the demands collected at the centers to the facilities  opened in their respective clusters. 
\begin{lemma}
\label{bd-cost-to-facility}
	The cost of assigning the demands collected at the centers to the facilities opened in their respective clusters is bounded by $O(1)LP_{opt}$.
\end{lemma}
\begin{proof}
Let $j' \in \cliset'$. Let $\lambda(j')$ be the set of centers $j''$ such that facilities in $\T{j''}$ serve the demand of $j'$. Note that if some facility is opened in $\T{j'}$, then $\lambda(j')$ is $\{j'\}$ itself and if no facility is opened in $\T{j'}$, then $\lambda(j') = \{j'': \exists i \in \T{j''}$ \text{ such that demand of} $j'$ \text{is served by} $i$ as per the assignments done in Lemmas (\ref{io-in-grp-knm_new1.1}) and (\ref{io-in-grp-knm_new2.1})$\}$. 

The cost of assigning a part of the demand $d_{j'}$ to a facility opened in $\lambda(j') \cap \csparse$ is bounded differently from the part assigned to facilities in $\lambda(j') \cap \cdense$.

Let $j'' \in \csparse \cap \lambda(j'),~i \in \T{j''}$. Then, $\dist{j''}{i} \leq \dist{j''}{\sigmaone(j'')} \leq \dist{j'}{\sigmaone(j')}$.  This was the motivation to define $\tau(j')$ the way it was, while defining $LP_2$.  Last inequality follows as: either $j''$ is above $j'$ in the same MC (say $G_r$) (by Lemma (\ref{io-in-grp-knm_new1.1}.3)) or $j''$ is in the parent MC (say $G_s$) of $G_r$. 
In the first case, the edge ($j'',~\sigmaone(j'')$) is either in $G_r$ or is the connecting edge of $G_r$. The inequality follows as edge costs are non-increasing as we go up the tree. In the latter case, edge ($j'',~\sigmaone(j'')$) is either in $G_s$ or it is the connecting edge of $G_s$: in either case, $\dist{j''}{\sigmaone(j'')} \leq \dist{j'}{\sigmaone(j')}$ as the connecting edge of $G_s$ is no costlier than the edges in $G_s$ which are no costlier than the connecting edge of $G_r$ (possibly $\dist{j'}{\sigmaone(j')}$) which are no costlier than the edges in $G_r$.
Summing over all $j', j'' \in \csparse$, 
we see that this cost is bounded by $O(1)LP_{opt}$.	

Next, let $j'' \in \cdense \cap \lambda(j'), i \in \bundle{j''}$. Further, let ${g}_{i}$ be the total demand served by a facility $i$. 
Since ${g}_{i} \leq 3u$, the cost of transporting $3u$ units of demand from $j''$ to $i$ is $3u\hat{w}_i\dist{i}{j''}$. Summing it over all $i \in \bundle{j''}$, $j'' \in \cdense$, and then over all $j' \in \cliset'$, 
we get that the total cost for $\cdense$ is bounded by $O(1)LP_{opt}$. 
\end{proof} 

\subsection{$(2+\epsilon)$ factor capacity violation}

There is only one scenario in which we violate the capacities by a factor of $3$ in the previous section. In all other scenarios capacities scaled up by a factor of $(2 + \epsilon)$ are sufficient even to accommodate the demand of the children MCs. Consider this special scenario. Let $j_d$ be the center of the dense cluster and $j_s$ be its only child (sparse) cluster in the routing tree. Further let, $d_{j_d} = 1.99u$ and $d_{j_s} = .99u$. Then, we must have a total opening of more than $2$ in the clusters of $j_d$ and $j_s$ taken together whereas $LP_2$ opens only $1$. In such a scenario, if we treat $j_s$ with $j_d$ instead of considering it with the remaining sparse clusters of $G_r$, we can open $2$ facilities in $\tau({j_d}) \cup \tau({j_s})$ and they have to serve a total demand of at most $4\capacity$ ($1.99u + .99u  +$ at most $u$ of the remaining sparse clusters) within the MC, thereby violating the capacities by a factor of at most $2$. On the other hand, if $d_{j_d} = 1.01u$ and $d_{j_s} = .98u$, then we cannot guarantee to open $2$ facilities in $\tau({j_d}) \cup \tau({j_s})$. In this case, if we treated $j_s$ with $j_d$ and only $1$ facility is opened in $\tau({j_d}) \cup \tau({j_s})$, it will have to serve  a total demand of (close to) $3\capacity$ ($1.01u + .98u  +$ at most $u$ of the remaining sparse clusters) leading to violation of $3$ in capacity. 
Note that first case corresponds to the scenario when the residual demand of $j_d$ (viz. $.99u$ here) is large (close to $\capacity$) and the second case corresponds to the scenario when the residual demand of $j_d$ (viz. $.01u$ here) is small (close to $0$). In the first case we treat $j_s$ with $j_d$ whereas in the second case, we treat it with the remaining sparse clusters. 
In Section \ref{3factor}, one can imagine that a MC $G_r$ is partitioned into $\mcone{r}$ and $\mctwo{r}$ where $\mcone{r}$ contained only the dense cluster of $G_r$ and $\mctwo{r}$ contained all the sparse clusters of $G_r$.
We modify the partitions as follows: let $\resj{\dense} =  d_{j_d}/u - \floordjbyu{j_d}$: $(i)$ if $\resj{\dense}< \epsilon$: set $\mcone{r} = G_r \cap \cdense$, $\mctwo{r} = G_r \cap \csparse$, $\gamma_r=\floordjbyu{\dense}$, $ \sigma'_r = \sigma_r$. (This is same as above.) $(ii)$ otherwise, $ \epsilon \leq \resj{\dense}< 1$: set $\mcone{r} = (G_r \cap \cdense) \cup \{j_s\} $, $\mctwo{r} = (G_r \cap \csparse) \setminus \{j_s\}$, $\gamma_r=\floordjbyu{\dense} + \lvert \{j_s\} \lvert$ \footnote{ In case a component of dependency graph consists of a singleton dense cluster, $j_s$ may not exist. This case causes no problem even if $res(j_d)$ is large as it must be a leaf MC in this case.}, $ \sigma'_r = \max\{0, \sigma_r - 1\}$. 


We modify our $LP$ accordingly
so as to open at least $\gamma_r$ facilities in $\mcone{r}$ and $\alpha_r = \max\{0, \sigma'_r - 1\}$
facilities in $\mctwo{r}$.
Let $\Soner{r}=\mcone{r},~\soner{r}=\gamma_r$ and $\Stwor{r}=\mctwo{r},~\stwor{r}=\alpha_r, \tauhat{j'} = \T{j'}~\forall j'$. For $j' \in \cdense$, let $r_{j'} = \floor {\demandofj{j'}/\capacity}$.
Also, let $\tilde{\facilityset}= {\facilityset}$ and $\tilde{\budget}= {\budget}$. Let $w_i$ denote whether facility $i$ is opened in the solution or not. $LP_2$ is modified as follows: 

\noindent $LP_3:$ \textit{Min.}~$\obj{w}$ 
\begin{eqnarray}  
	&subject~ to \sum_{i \in \tauhat{j'}}{} w_{i} \leq 1 &\forall ~j' \in \csparse \label{LP-meta-clusters_const1-sparse}\\
	&\sumap{j' \in \Soner{r}} \sumap{i \in \tauhat{j'}} w_{i} \geq \soner{r} &\forall~\mcone{r}: \soner{r} \ne 0 \label{LP-meta-clusters_const1-group-CKM}\\
	&\sumap{j' \in \Stwor{r}}~\sumap{i \in \tauhat{j'}} w_{i} \geq \stwor{r}  &\forall~\mctwo{r}: \stwor{r} \ne 0 \label{LP-meta-clusters_const2}\\
	&\sum_{i \in \tilde{\facilityset}} f_iw_i \leq \tilde{\budget} \label{LP-meta-clusters_const3}\\
	&0 \leq w_i \leq 1 &\forall i \in \tilde{\facilityset}\label{LP-meta-clusters_const5}
\end{eqnarray}



\begin{lemma}
	A feasible solution $w'$ to $LP_3$ can be obtained such that 
	$ \obj{w'}\leq (2\ell+13) LP_{opt}$.
\end{lemma}
\begin{proof}
	Proof is similar to the proof of Lemma (\ref{feasiblesolution-costKNM}).
\end{proof}
Algorithm 3 can be modified to obtain Algorithm 4 as follows: whenever a constraint corresponding to (\ref{LP-meta-clusters_const1-sparse}) gets tight over integrally opened facilities, it is removed from $\Soner{r}$ or $\Stwor{r}$ wherever it belongs, in the same manner as line $12$ of Algorithm 3. 

\begin{algorithm}
	\begin{algorithmic}[1]
		\STATE \textbf{pseudo-integral($\tilde{\facilityset}$, $\tilde{\budget}$, $\soner{}$, $\stwor{}$, $\Soner{}$, $\Stwor{}$, $\tauhat{}$, $R'$	)}\\
		\STATE $\tilde{w}^{\facilityset}_i = 0 \ \forall i \in {\facilityset}$ 
		\WHILE { $\tilde{\facilityset} \ne \emptyset$}
		\STATE Compute an extreme point solution $\tilde{w}^{\tilde{\facilityset}}$ to $LP_3$.
		\STATE $\tilde{\facilityset}_0 \gets
		\{i \in \tilde{\facilityset}:\tilde{w}^{\tilde{\facilityset}}_i = 0\}$, $\tilde{\facilityset}_1 \gets \{i \in \tilde{\facilityset}:\tilde{w}^{\tilde{\facilityset}}_i = 1\}$. 
		
		\IF {$\lvert \tilde{\facilityset}_0 \lvert = 0$ and $\lvert \tilde{\facilityset}_1 \lvert = 0$}
		\STATE Return $\tilde{w}^{\facilityset}$.
		\ELSE
		\STATE For all MCs $\cen{r}$\{
		
		\WHILE  {$\exists j'\in \cen{r} \cap \cpoor$ such that constraint (\ref{LP-meta-clusters_const1-sparse}) is tight over $\tilde{\facilityset}_1$ i.e., $\sumap{i \in \tauhat{j'} \cap \tilde{\facilityset}_1} \tilde{w}^{\tilde{\facilityset}}_{i} = 1$}
		
		\STATE Remove the constraint corresponding to $j'$ from~(\ref{LP-meta-clusters_const1-sparse}). $\backslash*~$a facility in $\tau(j')$ has been opened$*\backslash$
		
		\STATE If $j'\in \Soner{r}$, set $\Soner{r} = \Soner{r} \setminus \{j'\},~\soner{r} = \max\{0,\soner{r} - 1\}$. $\backslash*~$delete the contribution of $j'$ in constraint (\ref{LP-meta-clusters_const1-group-CKM}) $*\backslash$
		
		
		\STATE If $j'\in \Stwor{r}$, set $\Stwor{r} = \Stwor{r} \setminus \{j'\},~\stwor{r} =\max\{0, \stwor{r} - 1\}$.$\backslash*~$delete the contribution of $j'$ in constraint (\ref{LP-meta-clusters_const2}) $*\backslash$ 
		
		\STATE If $\stwor{r} = 0$, remove the constraint corresponding to the MC from~(\ref{LP-meta-clusters_const2}).$\backslash*~$$\alpha_r$ facilities have been opened in $G_r \cap \csparse$ $*\backslash$
		\ENDWHILE
		
		\STATE If $\exists j'\in \cen{r} \cap \cdense$, set $\soner{r} = \soner{r} - | \tauhat{j'} \cap \tilde{\facilityset}_1|$. $\backslash*$ decrement $\soner{r}$ by the number of integrally opened facilities in $\tauhat{j'}$ $*\backslash$
		
		\STATE If $\soner{r} = 0$, remove the constraint corresponding to the MC from~(\ref{LP-meta-clusters_const1-group-CKM}). $\backslash*~$$\gamma_{r}$ facilities have been opened in $\mcone{r}$ $*\backslash$
		
		\ENDIF
		\STATE $\tilde{\facilityset} \gets \tilde{\facilityset} \setminus (\tilde{\facilityset}_0 \cup \tilde{\facilityset}_1)$,  
		$\tilde{\budget} \gets {\tilde{\budget}} - \sum_{i \in \tilde{\facilityset}_1}{}f_i\tilde{w}^{\tilde{\facilityset}}_i$,
		$\tauhat{j'} \gets \tauhat{j'} \setminus (\tilde{\facilityset}_1 \cup \tilde{\facilityset}_0) ~\forall j' \in \clientset'$.
		
		\ENDWHILE
		\STATE Return $\tilde{w}^{\facilityset}$.
	\end{algorithmic}
	\label{alg4}
	\caption{Obtaining a pseudo-integral solution}
\end{algorithm} 

\begin{lemma} 
	The solution $\tilde{w}$ given by Iterative Rounding Algorithm satisfies the following: i) $\tilde{w}$ is feasible, ii) $\tilde{w}$ has at most two fractional facilities and 
	iii) $\obj{\tilde{w}} \leq (2\ell+13) LP_{opt}$.
\end{lemma}

\begin{proof}
	Proof is similar to the proof of Lemma (\ref{algo-solution-costKNM-0-1}).
\end{proof}

%
The two fractionally opened facilities, if any, are opened integrally as in Section \ref{sectionfinal} at a loss of additive $f_{max}$ in the budget. Let $\hat{w}$ denote the integrally open solution.

In the next lemma, we show that $\hat{w}$ has sufficient number of  facilities opened in each MC to serve the demand the MC is responsible for, except possibly for $u$ units.  
Let $M$ be the set of all meta clusters and $M_1$ be the set of meta clusters, each consisting of exactly one dense and one sparse cluster. MCs in $M_1$ need special treatment and will be considered separately. Lemma (\ref{io-in-grp-knm_new1}) presents the assignments done within a MC and discusses their impact on the capacity and the cost bounds. 

\begin{lemma}
	\label{io-in-grp-knm_new1}
	Consider a meta-cluster $\cen{r}$. Suppose the capacities are scaled up by a factor of $ 2+\epsilon$ for $\ell \geq 1/\epsilon$. Then,  
	(i) $\mcone{r}$ is {\em self-sufficient} i.e.,~its demand can be completely assigned within the cluster itself.	
	(ii) There are at most two clusters, one in $\mcone{r}$ and one in $\mctwo{r}$, with no facility opened in them and these clusters are sparse.
	(iii) Any (cluster) center responsible for the unserved demand of $j'$ is an ancestor of $j'$ in $\mathcal{H}(G_r)$. 
	(iv) At most $u$ units of demand in $\cen{r}$ remain un-assigned and it must be in the root cluster of $\cen{r}$. Such a MC cannot be a root MC. 
	(v) For $\cen{r} \in M \setminus M_1$, let $\beta_r =  \floordjbyu{\dense} + \max \{0, \sigma_r - 1\} $, where $\dense$ is the center of the dense root cluster in $\cen{r}$. Then, at least $\beta_r$ facilities are opened in $\cen{r}$. 
	(vi) 
	For $\cen{r} \in  M_1$, let $\beta_r = \floordjbyu{\dense}$ if $\resj{\dense} < \epsilon$  and $= \floordjbyu{\dense} + 1$ otherwise. Then, at least $\beta_r$ facilities are opened in $\cen{r}$.
	(vii) Total distance traveled by demand $\demandofj{j'}$ of $j' (\ne r) \in \cen{r} $ to reach the centers of the clusters in which they are served is bounded by $2d_{j'}\dist{j'}{\sigmaone(j')}$. 
\end{lemma}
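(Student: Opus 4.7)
The plan is to prove the seven claims in an order that matches their logical dependencies: the combinatorial facility-count claims (v), (vi), (ii) first, followed by the capacity claim (i), the assignment claims (iii), (iv), and finally the distance claim (vii). Throughout I will work with the integrally-open solution $\hat{w}$ from Lemma~\ref{budget-bound}. For (v) and (vi), a direct case analysis on $\resj{\dense}<\epsilon$ (case~1) versus $\resj{\dense}\ge\epsilon$ (case~2) suffices: in case~1, $\gamma_r=\floordjbyu{\dense}$ and $|\mctwo{r}|=q_r$; in case~2, $\gamma_r=\floordjbyu{\dense}+1$ and $|\mctwo{r}|=q_r-1$. Summing $\gamma_r$ with $\max\{0,q'_r-1\}$ gives the stated $\beta_r$, with the $M_1$ cases corresponding to $\mctwo{r}=\emptyset$. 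Claim (ii) then follows from pigeonhole: constraint (\ref{LP-meta-clusters_const1-sparse}) caps each sparse cluster at one open facility and (\ref{LP-meta-clusters_const1-dense}) forces every dense cluster to have at least $\floordjbyu{j}\ge 1$ facilities, so the $\gamma_r$ openings in $\mcone{r}$ and the $\max\{0,q'_r-1\}$ openings in $\mctwo{r}$ leave at most one empty cluster in each part, and by (\ref{LP-meta-clusters_const1-dense}) the empty one is sparse.

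For (i), the $\floordjbyu{j}$ facilities in a dense cluster $j$ supplied by (\ref{LP-meta-clusters_const1-dense}) have aggregate scaled capacity $\floordjbyu{j}(2+\epsilon)u\ge d_j$ (using $\floordjbyu{j}\ge 1$ and $\resj{j}<1$), and case~2 opens an extra facility precisely to absorb the $\ge\epsilon u$ residual when $\resj{j}\ge\epsilon$. For (iii) and (iv), I would specify a bottom-up sweep of $\mathcal{H}(\cen{r})$: each cluster, in post-order, pushes its accumulated demand up the edge to $\sigma(j)$ and the first ancestor with an open, non-saturated facility absorbs what it can up to $(2+\epsilon)u$. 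Dense clusters keep their own demand by (i); every sparse cluster with an open facility handles its own $<u$ plus at most $<u$ from the unique empty cluster below (by (ii)), staying within $(2+\epsilon)u$. Only the root of $\cen{r}$, when it is sparse and empty, can fail to absorb its own $<u$ demand, giving (iv); the sweep direction gives (iii).

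The main obstacle is (vii): bounding the total travel of $\demandofj{j}$ by $2d_j\dist{j}{\sigma(j)}$. The core structural fact is that edge costs on $\mathcal{H}(\cen{r})$ are non-increasing along any root-ward path, so any two consecutive up-edges together weigh at most $2\dist{j}{\sigma(j)}$. It therefore suffices to argue $d_j$ is absorbed within two hops of $j$. If $\sigma(j)$ has an open facility, absorption happens in one hop because the facility's spare capacity $(2+\epsilon)u-d_{\sigma(j)}>(1+\epsilon)u$ exceeds $d_j<u$. Otherwise $\sigma(j)$ is the unique empty sparse cluster in its part by (ii), so $\sigma(\sigma(j))$ must contain an open facility and no other empty cluster lies above; the aggregated flow $d_j+d_{\sigma(j)}<2u\le(2+\epsilon)u$ is then absorbed at $\sigma(\sigma(j))$ in the second hop. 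Ruling out a third hop, which hinges on the one-empty-per-part invariant together with the spare-capacity computation, is the crux of the argument and will need to be argued separately for the junction between $\mctwo{r}$ and $\mcone{r}$, where (i) guarantees ample spare capacity in the dense root.
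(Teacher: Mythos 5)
Your handling of (i), (ii), (v), (vi) is essentially the paper's (count $\gamma_r$ plus $\max\{0,q'_r-1\}$, pigeonhole with constraints (\ref{LP-meta-clusters_const1-sparse}) and (\ref{LP-meta-clusters_const1-dense})), and the observation that edge costs are non-increasing toward the root, so two consecutive hops cost at most $2\dist{j}{\sigma(j)}$, is also the right tool for (vii). The genuine gap is precisely the point you defer as "the crux": the junction between $\mctwo{r}$ and $\mcone{r}$. Your one-hop estimate (spare capacity $(2+\epsilon)u-d_{\sigma(j)}>(1+\epsilon)u$) is valid only when $\sigma(j)$ is sparse, and your fallback that "(i) guarantees ample spare capacity in the dense root" is false: claim (i) only says the dense cluster's own demand fits into its $\floordjbyu{\dense}$ mandated facilities at scaled capacity. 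If $\floordjbyu{\dense}=1$ and $\resj{\dense}$ is close to $1$ (i.e., $d_{\dense}$ just under $2u$), the aggregate spare at the dense root is about $\epsilon u$, nowhere near the $<u$ (or $<2u$ when the empty chain $j\to\sparseone$ occurs) that must be absorbed there. This is exactly the scenario the lemma's case distinction exists for: when $\resj{\dense}\geq\epsilon$, constraint (\ref{LP-meta-clusters_const1-group-CKM}) forces $\gamma_r=\floordjbyu{\dense}+1\geq 2$ open facilities in $\T{\dense}\cup\T{\sparseone}$, and $u\cdot\resj{\dense}+d_{\sparseone}+d_j$ is spread over them within factor $2$; when $\resj{\dense}<\epsilon$, a single facility in $\T{\dense}$ takes $u\cdot\resj{\dense}+d_j<(1+\epsilon)u$ on top of its $u$, giving $2+\epsilon$. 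Without invoking the extra facility that the LP guarantees for $\mcone{r}$, neither your capacity accounting nor the two-hop bound in (vii) goes through.

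A secondary issue is your "first ancestor with an open, non-saturated facility" sweep for (iii)/(iv): unless the above junction analysis is in place, saturation could in principle push demand more than two hops, which would also break (vii). The paper avoids this by assigning the (unique) empty cluster of $\mctwo{r}$ directly to the facility in its parent when $\sigma(j)\in\mctwo{r}$ (one hop), handling $\sigma(j)\in\mcone{r}$ by the explicit two-case analysis (at most two hops, yielding $\dist{j}{\dense}\leq 2\dist{j}{\sigma(j)}$), and shipping the demand of an empty root cluster to the parent meta-cluster, where it is accounted for in Lemma (\ref{io-in-grp-knm_new2}) rather than here.
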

\begin{proof}
		($i$) Let $j_d \in \cdense \cap \mcone{r}$.
		Consider the case when $res(j_d) < \epsilon$. The total demand $(\floordjbyu{\dense} + res(j_d))u  \leq (\floordjbyu{\dense} + \epsilon)u$ of $\mcone{r}$ can be distributed to the  opened facilities ($\geq \floordjbyu{\dense}$) at a loss of factor $2$ in capacity as $ \floordjbyu{\dense} \geq 1$.
		
		When  $\epsilon \leq res(j_d) <1$, the demand of $\mcone{r}$ is at most $(\floordjbyu{\dense} + res(j_d) +1)u  \leq (\floordjbyu{\dense} + 2)u$. The available opening is $\floordjbyu{\dense} + 1$. Thus, the capacity violation is at most $(\floordjbyu{\dense} + 2)u / (\floordjbyu{\dense} + 1)u <2$ as $ \floordjbyu{\dense} \geq 1$. Hence $\mcone{r}$ is self-sufficient.

		
		For $\sigma_r=0$, ($ii$) - ($vi$) hold vacuously. Thus, now onwards we assume that $\sigma_r \geq 1$ ($ii$) $LP_2$ opens $\max \{0, \sigma'_r - 1\} $ facilities in $\mctwo{r}$ where $\sigma'_r$ is the number of clusters in $\mctwo{r}$. Constraint (\ref{LP-meta-clusters_const1-sparse}) ensures that at most one facility is opened in each cluster. Thus, there is at most one cluster  in $\mctwo{r}$ with no facility opened in it and it is a sparse cluster. Next consider $\mcone{r}$ with a sparse cluster in it, i.e.,~$\mcone{r} = \{j_d, j_s\}$, 
		it is possible that all the $\gamma_r$ facilities are opened in $\T{j_d}$ and no facility is opened in $\T{j_s}$. Thus, there are at most two clusters with no facility opened in them and these clusters are sparse. 
		($iii$) \& ($iv$) Let $j' \in \mctwo{r}$ such that no facility is opened  in $\T{j'}$. If $\sigmaone(j') \in \mctwo{r}$, then $LP_2$ must have opened a facility in $\T{\sigmaone(j')}$. Demand of $j'$ is assigned to this facility at a loss of maximum $2$ factor in capacity. 
		If $\sigmaone(j') \notin \mctwo{r}$ then either $\mcone{r}$ is empty or $\sigmaone(j') \in \mcone{r}$. In the former case $j'$ must be the root of $\cen{r}$ and $\cen{r}$ cannot be the root MC. Clearly, at most $u$ units of demand of $\cen{r}$ remain unassigned within $\cen{r}$.
		In the latter case i.e.,~$\sigmaone(j')  \in \mcone{r}$, then $\sigmaone(j')$  is either $\dense$ or $\sparseone$. We will next show that demand of $j'$ will be absorbed in $\T{\dense} \cup \T{\sparseone}$ in the claimed bounds along with claims $(v)$ and $(vi)$ of the lemma.


		\begin{enumerate}
			\item $\resj{\dense} < \epsilon$, we have $ \mcone{r} = \{\dense\},~\gamma_r= \floordjbyu{\dense}$, $\mctwo{r} = \cen{r} \cap \csparse$, 
			$ \sigma'_r =  \sigma_r$, 
			and $\beta_r=\floordjbyu{\dense}+ \sigma_r -1$. 
			In this case, $j' = \sparseone$ and $\sigmaone(j') = \dense$. 
			$LP_2$ must have opened at least $\floordjbyu{\dense} \geq 1$ facilities in 
			$\T{\dense}$
			Total demand $(\floordjbyu{\dense} + res(j_d) + 1)) u$ of $j_d$ and $j'$ can be distributed to the  facilities  opened in $\T{j_d}$ ($\geq \floordjbyu{\dense}$) at a loss of factor $2 + \epsilon$ in capacity, as $res(j_d) < \epsilon$ and $1 \leq \floordjbyu{\dense}$.
			

			\item $\epsilon \leq \resj{\dense}  < 1$, we have $\mcone{r} = \{\dense,~\sparseone\}$,$~\gamma_r= \floordjbyu{\dense} + 1$,
			$\mctwo{r} = \cen{r} \cap \csparse \setminus \{\sparseone\} $, $ \sigma'_r =  \sigma_r - 1$ and 
			$\beta_r=\floordjbyu{\dense}+\sigma_r-1$ if $\sigma_r \geq 2$ and $=\floordjbyu{\dense}+1$ if $\sigma_r=1$.
			In this case, $\sigmaone(j') = \sparseone$.
			In the worst case, no facility is opened in $\T{\sparseone}$. 
			$LP_2$ must have opened at least $\floor{d_{\dense}/u} + 1 \geq 2$ facilities in $\T{\dense} \cup \T{\sparseone}$.
			Total demand $(\floordjbyu{\dense} + res(j_d) + 1 + 1)) u$ of $j_d, j_s$ and $j'$ can be distributed to the  facilities  opened in $\T{j_d} \cup \T{j_s}$ ($\geq \floordjbyu{\dense} + 1$) at a loss of factor $2$ in capacity, as  $ \floordjbyu{\dense} + 1 \geq 2$.
			
		\end{enumerate} 
		
		($vii$) Clearly, $\dist{j'}{\dense} \leq 2 \dist{j'}{\sigmaone(j')}$. ($2$) above also handles the case when no facility is opened in a sparse cluster in $\mcone{r}$.

\end{proof}
Lemma (\ref{io-in-grp-knm_new2}) deals with the remaining demand that we fail to assign  within the MC. Such demand is assigned in the parent MC. Lemma (\ref{io-in-grp-knm_new2}) discusses the cost bound for such assignments and the impact of the demand coming onto $\cen{r}$ from the children MCs along with the demand within $\cen{r}$ on capacity.

\begin{lemma}
	\label{io-in-grp-knm_new2}
	Consider a meta-cluster $\cen{r}$. The demand of $\cen{r}$ and the demand coming onto $\cen{r}$ from the children meta-clusters can be assigned to the facilities opened in $\cen{r}$ such that: (i) capacities are violated at most by a factor of ($2+\frac{4}{\ell-1}$) for $\ell \geq 1/\epsilon$ and, (ii) Total distance traveled by demand $\demandofj{j'}$ of $j' \in \cliset'$ to reach the centers of the clusters in which they are served is bounded by $\ell d_{j'}\dist{j'}{\sigmaone(j')}$.
\end{lemma}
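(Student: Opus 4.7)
My plan is to bound capacity and distance separately, propagating demand bottom-up along the tree of meta-clusters, using as inductive hypothesis the statement of Lemma~\ref{io-in-grp-knm_new1}(iv) that every meta-cluster forwards at most $u$ units of unassigned demand to its parent. Because $\cen{r}$ is a connected subtree of size $t_r$ of a binary tree, it has at most $t_r+1$ child meta-clusters, so the incoming demand into $\cen{r}$ is at most $(t_r+1)u$. Adding $\cen{r}$'s own demand (at most $\demandofj{\dense}+q_r u$, taking $\demandofj{\dense}=0$ when there is no dense cluster) and subtracting up to $u$ units forwarded upward (zero for root MCs), the mass to be absorbed by the $\beta_r$ facilities of Lemma~\ref{io-in-grp-knm_new1}(v)(vi) is bounded. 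The required inequality $\beta_r(2+4/(l-1))u \ge$ absorbed demand then reduces to a short case analysis: for non-root sparse-only MCs, the worst case $q_r=l$ only needs $2/(l-1)$; for leaf MCs and MCs in $M_1$, there are no child meta-clusters and the bound is slack; the tight case is a dense root MC where $\beta_r=\floor{\demandofj{\dense}/u}+q_r-1$ and, after substituting $\demandofj{\dense}=\floor{\demandofj{\dense}/u}\,u + \resj{\dense}\,u$ and $t_r=1+q_r$, the inequality rearranges to $\epsilon'\beta_r \ge \resj{\dense} + 4 - \floor{\demandofj{\dense}/u}$, which is tight at $\floor{\demandofj{\dense}/u}=1$, $q_r=l-1$ and yields $\epsilon'=4/(l-1)$.

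For the distance bound I would track each unit of $d_j$ from its cluster center $j$ to the facility that finally serves it. The within-MC portion is already bounded by $2 d_j\dist{j}{\sigma(j)}$ by Lemma~\ref{io-in-grp-knm_new1}(vii). If the demand exits, it first crosses the connecting edge of its MC, which by the structural property recalled in Section~\ref{Byrkaetal} is cheaper than any internal edge of that MC and hence at most $\dist{j}{\sigma(j)}$, and then climbs in the parent MC along edges whose costs monotonically decrease as we move up the tree, so these are at most $\dist{j}{\sigma(j)}$ as well. Since each MC has size at most $l$, the total number of edges traversed on this ascent is $O(l)$, and summing the $O(l)$ contributions of length $\dist{j}{\sigma(j)}$ gives the claimed $l\, d_j\,\dist{j}{\sigma(j)}$.

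The main obstacle is the capacity case analysis: the value $4/(l-1)$ is tight only for the dense root MC with $\floor{\demandofj{\dense}/u}=1$, and verifying the algebra simultaneously in Case 1 and Case 2 of the partition of $\cen{r}$ into $\mcone{r}$ and $\mctwo{r}$---while correctly attributing the residual $\resj{\dense}\,u$ of the dense cluster, the $(t_r+1)u$ incoming demand from children, and the $u$ units forwarded to the parent---is the most delicate part of the argument.
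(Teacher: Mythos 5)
Your proposal follows essentially the same route as the paper's proof: the capacity bound is obtained by tallying own demand plus incoming demand (at most $(t_r+1)u$ from the at-most-$t_r+1$ children, via the binary-tree degree count) against $\beta_r u$ of available capacity, and the distance bound is obtained by routing $d_j$ along the connecting edge $(j,\sigma(j))$ and then up to $l-1$ more edges in the parent MC, each no costlier. One caveat: you declare the leaf-MC and $M_1$ cases ``slack,'' but they are not uniformly so --- in particular a leaf MC in $M_1$ under case~1 of the partition achieves a $(2+\epsilon)$ violation exactly, and the paper handles $t_r=1$, $t_r=2$ (both sparse vs.\ $M_1$, and within $M_1$ case~1 vs.\ case~2), and $t_r>2$ by separate short checks because $\beta_r$ can be as small as $1$ there and the generic $2+4/\beta_r$ estimate does not apply; you would need to carry out these checks explicitly rather than wave at them.
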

\begin{proof}
	Proof is similar to the proof of Lemma (\ref{io-in-grp-knm_new2.1}).
\end{proof}

\begin{lemma}
	\label{bd-cost-to-facility-1}
	The cost of assigning the demands collected at the centers to the facilities opened in their respective clusters is bounded by $(2+\epsilon)(2\ell+1)LP_{opt}$.
\end{lemma}
\begin{proof}
	Proof is similar to the proof of Lemma (\ref{bd-cost-to-facility}).
\end{proof}

	\section{Capacitated $k$ Facility Location Problem} \label{section_cap$k$FLP}
	Standard LP-Relaxation of the C$k$FLP can be found in Aardal \etal~\cite{capkmGijswijtL2013}. 
	When $\facilitycost = 0$, the problem reduces to the $k$-median problem and when $k = \left| \facilityset \right| $ it reduces to the facility location problem. Our techniques for CKnM provide similar results for C$k$FLP in a straight forward manner i.e.,~O($1/\epsilon^2$) factor approximation, violating the capacities by a factor of $(2 + \epsilon)$ and cardinality by plus $1$. 
	The violation of cardinality can be avoided  by opening the facility with larger opening integrally while converting a pseudo integral solution into an integrally open solution.  
	Thus, we obtain Theorem~\ref{theorem2}.
	
	\textbf{Proof of Theorem 3:} Let $\rho^{*} = <x^*, y^*>$ denote the optimal $LP$ solution. For sparse clusters, we open the cheapest facility $i^*$ in $\ballofj{j}$, close all facilities in the cluster and shift their demands to $i^*$.
	Let  $\hat{\rho} = <\hat{x}, \hat{y}>$ be the solution so obtained. It is easy to see that we loose at most a factor of $2$ in cardinality, and $\mathcal{C}ost$k$FLP(\hat{x}, \hat{y})$ is within $O(1) LP_{opt}$. 
	
	To handle dense clusters, we introduce the notion of cluster instances. For each cluster center $j' \in \crich$, let  ${\pricef{j'}} = \sum_{i \in \bundle{j'}} \facilitycost y^*_i$ and $ \price{j'} = \sum_{i \in \bundle{j'}} \sum_{ j \in \clientset} x^*_{i j} \lbrack \dist{i}{ j} +4 \C{ j} \rbrack $.
	We define a cluster instance $\starinst{S}{j'}{\bundle{j'}}{\demandofj{j'}}{\price{j'},{\pricef{j'}}}$ as follows: Minimize $Cost_{CI}(z) = \sum_{\singlefacility \in \bundle{j'}} ( f_i +\capacity \dist{i}{j'} ) \zofi{i}$ s.t. $\capacity \sum_{\singlefacility \in \bundle{j'}} \zofi{i} \geq \demandofj{j'}$ and $z_i \in [0,1]$. 
	It can be shown that $\zofi{i} = \sum_{ j \in \clientset}x_{i j}^*/\capacity = l_i/\capacity \le y^*_i\ \forall i \in \bundle{j'}$ is a feasible solution with cost at most $\pricef{j'} + \price{j'}$. An almost integral solution $z'$ is obtained by arranging the fractionally opened facilities in $z$ in non-decreasing order of $\facilitycost  + \dist{i}{j'} \capacity $ and greedily transferring the total opening $\cardTwo{z}{\bundle{j'}}$ to them. Let $ l_i' = \primezofi{i}  \capacity $. For a fixed $\epsilon > 0$, an integrally open solution $\hat{z}$ and assignment $\hat{l}$ (possibly fractional) is obtained as follows: let $i_1$ be the fractionally opened facility, if any. If  $\primezofi{i_1} < \epsilon$, close $i_1$ and shift its demand to another integrally opened facility at a loss of factor $(1+\epsilon)$ in its capacity. Else ($\primezofi{i_1} \geq \epsilon$), open $i_1$, at a loss of factor $2$ in cardinality and $1/\epsilon$ in facility cost. The solution $\hat{z}$ satisfies the following: $\hat{l}_i \leq ( 1 + \epsilon) \hat{z}_i u \ \forall i \in \bundle{j'}$, $\sum_{i \in \bundle{j'}} \hat{z}_i \le 2 \sum_{i \in \bundle{j'}} z'_i \ \forall  j' \in\ \crich$ and $Cost_{CI} (\hat{z}) \leq \max\{1/\epsilon, 1 + \epsilon\} Cost_{CI} (\hat{z})$.
	

\section{Conclusion}
\label{conclusion}
In this work, we presented the first constant factor approximation algorithm for uniform hard capacitated knapsack median problem violating the budget by a factor of ($1+\epsilon$) and capacity by ($2+\epsilon$). Two variety of results were presented for capacitated $k$-facility location problem with a trade-off between capacity and cardinality violation: an $O(1/\epsilon^2)$ factor approximation violating capacities by ($2+\epsilon$) and a $O(1/\epsilon)$ factor approximation, violating the capacity by a factor of at most $(1 + \epsilon)$ using at most $2k$ facilities.
As a by-product, we also gave a constant factor 
approximation for uniform capacitated facility location at a loss of $(1+\epsilon)$ in capacity from the natural LP. The result shows that the natural LP is not too bad.


It would be interesting to see if the capacity violation can be reduced to ($1 + \epsilon$) using the techniques of Byrka \etal~\cite{ByrkaRybicki2015}. Avoiding violation of budget will require strengthening the LP in a non-trivial way.
Another direction for future work would be to extend our results to non-uniform capacities.  Conflicting requirement of facility costs and capacities makes the problem challenging. 
	


\bibliography{ref_master}
\end{document}